\begin{document}

\title[Spatial sign correlation]{Spatial sign correlation}
\author[A. D{\"u}rre]{Alexander D{\"u}rre}
\author[D. Vogel]{Daniel Vogel}
\author[R. Fried]{Roland Fried}
\today

\address{Fakult\"at Statistik, Technische Universit\"at Dortmund, 
44221 Dortmund, Germany}
\email{alexander.duerre@tu-dortmund.de}
\email{daniel.vogel@tu-dortmund.de}
\email{fried@statistik.tu-dortmund.de}


\keywords{Gaussian rank correlation, Gnanadesikan--Kettenring estimator, Kendall's tau, spatial median, spatial sign covariance matrix, Spearman's rho}

\begin{abstract}
\small
A new robust correlation estimator based on the spatial sign covariance matrix (SSCM) is proposed. We derive its asymptotic distribution and influence function at elliptical distributions. Finite sample and robustness properties are studied and compared to other robust correlation estimators by means of numerical simulations.
\end{abstract}

\maketitle


\renewcommand{\thefootnote}{\fnsymbol{footnote}}
\section{Introduction}
\label{sec:intro}

The problem studied in this article is robust and high-dimensional correlation estimation. 
By \emph{robust} we mean insusceptible to outliers and erroneous observations, that is, we examine alternatives to the commonly used, but highly non-robust Pearson correlation. Over the last few decades, many robust multivariate scatter estimators --- and consequently robust correlation matrix estimators --- have been proposed, see \cite{Maronna2006} for a review. Much attention has been paid to affine equivariant estimators.
If we denote by $\X_n = (\bX_1,...,\bX_n)^T$  the $n \times p$ data matrix containing the $p$-dimensional observations $\bX_1,...,\bX_n$  as rows, then the data set $\Y_n = \X_n A^T + \bEins_n \bb^T$ is obtained by applying the affine linear transformation $\x \mapsto A \x + \bb$ to each data point. An affine equivariant scatter estimator $\hS_n$ satisfies $\hS_n(\Y_n) =  A \hS_n(\X_n) A^T$ for any $\bb \in \R^p$ and any full rank square matrix $A$, i.e.\ it behaves as the covariance matrix under linear transformations of the data.

The second attribute \emph{high-dimensional} means two things: being fast to compute, also in high-dimensions, and being defined also for sparse, high-dimensional data, i.e.\ in the $p > n$ situation. 
Both properties basically prohibit robust, affine equivariant estimators: they are usually hard to compute in high dimensions, and they are not defined in the $p > n$ setting --- or coincide with a multiple of the sample covariance matrix \citep{Tyler2010} and are thus not robust. In fact, both requirements suggest the use of pairwise correlation estimators. In a pairwise correlation estimate $\hat{P}_n \in \Rpp$ each entry $\hat{\rho}_{i,j}$ is computed only from the $i$th and the $j$th coordinate of the data, implying that the computing time increases quadratically with $p$.

The price one usually has to pay for dropping affine equivariance and resorting to pairwise correlation estimators is the loss of non-negative definiteness of the matrix estimate $\hat{P}_n$.
For example, many nonparametric correlation matrix estimators (see Section \ref{sec:anal.comp}) are based on an initial scatter matrix estimate which is non-negative definite, but not affine equivariant. The loss of non-negative definiteness occurs when a component-wise transformation is applied to render the entries consistent for the moment correlation.  
However, in applications where non-negative definiteness is important, one can ``orthogonalize'' the matrix estimate as suggested by \citet{Maronnazamar2002}, which involves an eigenvalue decomposition.

The new proposal is based on the spatial sign covariance matrix (SSCM). This is the covariance matrix of the projections of the centered observations onto the $p$-dimensional unit sphere. This scatter estimator is of frequent use in multivariate data analysis due to its robustness. Since every observation is basically trimmed to length 1, the impact of any contamination is bounded. It is known that within symmetric data models, the SSCM consistently estimates the eigenvectors of the covariance matrix, but not the eigenvalues. In fact, the connection between the eigenvalues of the population SSCM and the covariance matrix is an open problem. We solve this problem for the special case of two-dimensional elliptical distributions. This enables us to robustly estimate a two-dimensional covariance matrix (up to scale) based on the SSCM and hence devise a correlation estimator, which we call \emph{spatial sign correlation}. We further derive the asymptotic distribution of the SSCM and the spatial sign correlation and compute the influence function of the latter.

The main advantage of the new estimator is its simplicity. It is very fast to compute, it requires neither an iterative algorithm nor any ranking or sorting of the data. It is furthermore distribution-free within the elliptical model, it behaves equally well for very heavy-tailed and very peaked distributions, which is true for hardly any other robust scatter estimator.\footnote{For these statements to be true, the SSCM has to based on an appropriate location estimator.}

The paper has two parts: In part 1, consisting of Sections \ref{sec:sscm} -- \ref{sec:correlation}, we develop the spatial sign correlation estimator and derive its asymptotics. Being aware that this estimator is one out of many that were introduced for similar purposes, the second part, consisting of Sections \ref{sec:anal.comp} and \ref{sec:num.comp}, gathers together analytic results about a variety of alternatives and compares them in an elaborate simulation study to provide some guidance within the ever increasing number of robust correlation estimates. All proofs are deferred to the appendix. We close this section by introducing some recurrent terms and notation. 

In order to study the properties of the new estimator analytically we will assume the data to stem from the elliptical model.
A continuous distribution $F$ on $\mathds{R}^p$ is said to be \emph{elliptical} if it has a Lebesgue-density $f$ of the form
\begin{equation} \label{density}
	f(\bx) = \det(V)^{-\frac{1}{2}} g\big((\bx-\bmu)^T V^{-1} (\bx-\bmu)\big)
\end{equation}
for some $\bmu \in \mathds{R}^p$ and symmetric, positive definite $p \times p$ matrix $V$. We call $\bmu$ the \emph{location} or \emph{symmetry center} and $V$ the \emph{shape matrix}, since it describes the shape of the elliptical contour lines of the density.  The class of all continuous elliptical distributions $F$ on $\mathds{R}^p$ having these parameters is denoted by $\mathscr{E}_p(\bmu, V)$.
The shape matrix $V$ is unique only up to scale, that is, $\Ee_p(\bmu, V) = \Ee_p(\bmu, c V)$ for any $c > 0$. For scale-free functions of $V$, such as correlations, which we consider here, this ambiguity is irrelevant. A common view on the \emph{shape} of an elliptical distribution is to treat it as an equivalence class of positive definite random matrices being proportional to each other. We adopt this notion here: in the results of this exposition, $V$ can be any representative from its equivalence class. For example, if second moments exist, one can always take the covariance matrix  --- or any suitably scaled multiple of it. However, the results are more general, the existence of second --- or even first --- moments is not required. Throughout the paper we let 
\be \label{evd} 
	 V = U \Lambda U^T
\ee 
denote an eigenvalue decomposition of $V$, where $U$ is an orthogonal matrix containing the eigenvectors of $V$ as columns and $\Lambda \diag(\lambda_1,...,\lambda_p)$ is such that $0 < \lambda_p \le \ldots \le \lambda_1$. We use $||\cdot||$ to denote the $L_2$ norm of a vector.


\section{The spatial sign covariance matrix}
\label{sec:sscm}

We define the spatial sign covariance matrix of a multivariate distribution and derive its connection to the shape matrix $V$ in case of a two-dimensional elliptical distribution. For $\bx \in \R^p$ define the \emph{spatial sign} $\bs(\bx)$ of $\bx$ as
	$\bs(\bx) = \bx/||\bx||$ if $\bx \neq \bNull$ and $\bs(\bx) = \bNull$ otherwise. Let $\bX$ be a $p$-dimensional random vector ($p \ge 2$) having distribution $F$. We call  
\[
	\bmu(F) = \bmu(\bX) = \argmin_{\mbox{\scriptsize $\bmu$} \in \R^p} \E \left( ||\bX - \bmu|| - ||\bX|| \right)
\]
the \emph{spatial median} and, following the terminology of \citet{Visuri2000}, 
\[
	S(F) = S(\bX) = \E\left( \bs(\bX-\bmu) \bs(\bX-\bmu)^T \right)
\]
the \emph{spatial sign covariance matrix (SSCM)} of $F$ (or $\bX$). If there is no unique minimizing point of $\E \left( ||\bX - \bmu|| - ||\bX|| \right)$, then $\bmu(F)$ is the barycenter of the minimizing set. This may only happen if $F$ is concentrated on a line. For results on existence and uniqueness of the spatial median see \citet{Haldane1948}, \citet{Kemperman1987}, \citet{Milasevic1987} or \citet{Koltchinskii2000}. 
If the first moments of $F$ are finite, then the spatial median allows the more descriptive characterization as 
$\argmin_{\mbox{\scriptsize $\bmu$} \in \R^p} \E ||\bX - \bmu||$. The spatial median always exists.


Let $\X_n = (\bX_1,\dots,\bX_n)^T$ be a data sample of size $n$, where the $\bX_i$, $i = 1,...,n$, are i.i.d., each with distribution $F$. Define
\[
	\hat{S}_n (\X_n ;  \dt) = \ave_{i = 1,...,n} \bs(\bX_i-\dt)\bs(\bX_i-\dt)^T
\]
where $\dt \in \R^p$. 
Choosing $\dt = \bmu(F)$, we call the estimator $\hat{S}_n (\X_n ;  \bmu(F))$ the \emph{empirical SSCM with known location}. 
However, the location is usually unknown, and $\dt$ has to be replaced by a suitable location estimator $(\bT_n)_{n \in \N}$, and we refer to $\hat{S}_n (\X_n ;  \bT_n)$ as the \emph{empirical SSCM with unknown location}. The canonical location functional in this case is the \emph{(empirical) spatial median}
\[
	\hat{\bmu}_n = \hat{\bmu}_n(\X_n) = \min_{\mbox{\scriptsize $\bmu$} \in \R^p} \sum_{i=1}^n || \bX_i - \bmu ||.
\]
Under regularity conditions (the data points do not lie on a line and none of them coincides with $\hat{\bmu}_n$, see \citet{Kemperman1987}, p.~228) the spatial signs w.r.t.\ the empirical spatial median are centered, i.e.\ $\ave_{i=1}^n \bs(\bX_i - \hat{\bmu}_n)  = \bNull$. 
Hence, the empirical spatial sign covariance matrix $\hat{S}_n (\X_n ; \hat{\bmu}_n)$ is indeed the covariance matrix of the spatial signs --- if the latter are taken w.r.t.\ the spatial median. Our first proposition draws a connection between the shape of an elliptical distribution and the corresponding spatial sign covariance matrix.
\begin{proposition} \label{prop:S}
Let $F \in \Ee_p(\bmu,V)$ and $V = U \Lambda U^T$ denote an eigenvalue decomposition of $V$ with $0 < \lambda_p \le \ldots \le \lambda_1$. Then
\begin{enumerate}[(1)]
\item  \label{S 1} $\bmu(F) = \bmu$ and
\item \label{S 2} $S(F) = U \Delta U^T$, where $\Delta = \diag(\delta_1,...,\delta_p)$ is a diagonal matrix with $0 < \delta_p \le \ldots \le \delta_1$. 
\item \label{S 3} If $p = 2$, then $\delta_j = \sqrt{\lambda_j}/(\sqrt{\lambda_1} + \sqrt{\lambda_2})$, $j = 1, 2$.
\end{enumerate}
\end{proposition}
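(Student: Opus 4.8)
The plan is to reduce the eigenvalues $\delta_1,\delta_2$ of $S(F)$ to a single one-dimensional angular integral. By parts (\ref{S 1}) and (\ref{S 2}) the signs are taken about the true centre and we already know that $S(F)=U\Delta U^T$ with $\Delta=\diag(\delta_1,\delta_2)$; it remains only to identify $\delta_1$ and $\delta_2$. Since orthogonal maps preserve the Euclidean norm, $\bs(O\bx)=O\bs(\bx)$ for every orthogonal $O$, so the SSCM is orthogonally equivariant. Setting $\widetilde{\bX}=U^T(\bX-\bmu)$ we have $\widetilde{\bX}\in\Ee_2(\bNull,\Lambda)$ and $S(\widetilde{\bX})=U^T S(F)\,U=\Delta$. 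Thus $\delta_j=\E[s_j^2]$, where $\bs=(s_1,s_2)^T=\bs(\widetilde{\bX})$ is the spatial sign of a centred vector with shape matrix $\Lambda=\diag(\lambda_1,\lambda_2)$.

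Next I would invoke the stochastic representation of elliptical distributions: a centred vector with shape $\Lambda$ satisfies $\widetilde{\bX}\stackrel{d}{=}R\,\Lambda^{1/2}\mathbf{u}$, where $\mathbf{u}=(\cos\theta,\sin\theta)^T$ is uniform on the unit circle and the radial part $R\ge 0$ is independent of $\mathbf{u}$. The decisive observation is that the spatial sign is invariant under positive scaling, so the factor $R$ cancels and $\bs$ depends on the distribution only through $\Lambda$; this is precisely the source of the distribution-freeness within the elliptical model. Consequently
\begin{equation*}
 \delta_1 \;=\; \E\!\left[\frac{\lambda_1\cos^2\theta}{\lambda_1\cos^2\theta + \lambda_2\sin^2\theta}\right] \;=\; \frac{1}{2\pi}\int_0^{2\pi}\frac{\lambda_1\cos^2\theta}{\lambda_1\cos^2\theta + \lambda_2\sin^2\theta}\,d\theta ,
\end{equation*}
and analogously for $\delta_2$ with the two coordinates exchanged. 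Because the spatial sign has unit length almost surely (as $F$ is continuous), $\delta_1+\delta_2=\E\|\bs\|^2=\operatorname{tr}S(F)=1$, so it suffices to evaluate the single integral for $\delta_1$.

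The one genuinely computational step is this integral, which I would evaluate by the substitution $t=\tan\theta$; it turns the integrand into a rational function of $t$, and a partial-fraction decomposition of $1/\{(\lambda_1+\lambda_2 t^2)(1+t^2)\}$ together with the elementary integrals $\int_0^\infty(1+t^2)^{-1}\,dt=\pi/2$ and $\int_0^\infty(\lambda_1+\lambda_2 t^2)^{-1}\,dt=\pi/(2\sqrt{\lambda_1\lambda_2})$ yields $\delta_1=\sqrt{\lambda_1}/(\sqrt{\lambda_1}+\sqrt{\lambda_2})$. Combined with $\delta_2=1-\delta_1=\sqrt{\lambda_2}/(\sqrt{\lambda_1}+\sqrt{\lambda_2})$ this is the assertion, and since $\lambda_1\ge\lambda_2$ it is consistent with the ordering $\delta_2\le\delta_1$ from part (\ref{S 2}). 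I expect the main (and essentially only) obstacle to be the bookkeeping in this integral, in particular that the partial-fraction step presupposes $\lambda_1\neq\lambda_2$; the isotropic case $\lambda_1=\lambda_2$ must be handled separately, where symmetry gives $\delta_1=\delta_2=1/2$, in agreement with the formula, so that the stated expression also follows by continuity in $(\lambda_1,\lambda_2)$.
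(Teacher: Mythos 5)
Your proposal is correct, and it reaches the paper's angular integral
\(\frac{1}{2\pi}\int_0^{2\pi}\lambda_1\cos^2\theta\,\{\lambda_1\cos^2\theta+\lambda_2\sin^2\theta\}^{-1}d\theta\)
by the same two reductions the paper uses (orthogonal equivariance of the sign to diagonalize, then distribution-freeness of \(\bs(\bX-\bmu)\) within the elliptical family to replace \(F\) by a convenient spherical generator). Where you diverge is the evaluation of that integral: the paper substitutes \(z=e^{i\theta}\) and applies the residue theorem to a contour integral over the unit circle, whereas you use the real substitution \(t=\tan\theta\) and a partial-fraction decomposition, plus the trace identity \(\delta_1+\delta_2=\operatorname{tr}S(F)=1\) to avoid a second computation. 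Your route is more elementary and entirely self-contained --- I verified the partial fractions: with \(A=1/(\lambda_1-\lambda_2)\), \(B=-\lambda_2/(\lambda_1-\lambda_2)\) one indeed gets \(\delta_1=\sqrt{\lambda_1}/(\sqrt{\lambda_1}+\sqrt{\lambda_2})\) --- and you correctly flag the two bookkeeping points (restricting to a quarter period before substituting \(t=\tan\theta\), and the degenerate case \(\lambda_1=\lambda_2\), which the paper also excludes). What the paper's complex-analytic detour buys is reusability: the same contour, poles and residue machinery is recycled in the proof of Proposition \ref{prop:S.hat}(\ref{S.hat 3}) to evaluate the harder integrals \(I_1,I_2,I_3\), whose integrands have double poles and for which the partial-fraction route becomes markedly messier. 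As a standalone proof of Proposition \ref{prop:S}(\ref{S 3}), your argument is complete modulo the same brevity the paper itself allows for parts (\ref{S 1}) and (\ref{S 2}).
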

The proof is given in the appendix. Part (\ref{S 2}) of Proposition \ref{prop:S} states that the SSCM $S(F)$ and the shape matrix $V$ have the same eigenvectors and the same order of the corresponding eigenvalues. This has been known for some time, and the use of the SSCM has been proposed to robustify such multivariate analyses that are based on this information only, most notably principal component analysis, \citep{Marden1999, Locantore1999, Croux2002, Gervini2008}. Other such applications are direction-of-arrival estimation \citep{Visuri2001a}, or testing sphericity in the elliptical model \citep{Sirkia2009}. 
Part (\ref{S 3}) enables us to reconstruct the whole shape matrix $V$ from $S(F)$ in dimension $p =2$. Thus the SSCM can be directly employed for applications that rely on the shape information, but do not require any knowledge about the overall scale, most notably correlations. This result seems to be quite recent. It appears in a similar form in \citet{Croux2010} and has also been used by \citet{Vogel2008}, but neither of these articles provide a proof. 

The next result concerns the asymptotic behavior of the empirical SSCM. It is formulated using the $\vec$ operator, which stacks the columns of a matrix from left to right underneath each other, and the Kronecker product $\otimes$ \citep[e.g.][Sec.~2]{Magnus1999}. Both are connected by the identity $\vec(ABC) = (C^T \otimes A) \vec B$. 
\begin{proposition} \label{prop:S.hat}
Let $\bX, \bX_1,\ldots,\bX_n$ be i.i.d.\ random vectors with distribution $F$ satisfying $\E||\bX-\bmu||^{-1}<\infty$ and $(\bT_n)_{n\in \mathbb{N}}$ a sequence of random variables converging almost surely to $\bmu(F)$. Then, as $n \rightarrow \infty$, we have
\begin{enumerate}[(1)]
\item \label{S.hat 1}
	$\hS_n(\X_n; \bT_n) \asc  S(F)$, and
\item \label{S.hat 2}
if furthermore 
	$\sqrt{n}||\bT_n-\bmu||$ converges in distribution, 
	$\mathbb{E}\big\{ ||\bX-\mu||^{-3/2} \big\} <\infty$ and 
	$(\bX-\bmu)\eid -(\bX-\bmu)$, then $\hS_n(\X_n; \bT_n)$ is asymptotically normal, i.e.\ there is a non-negative definite $p^2 \times p^2$ matrix $W_{\hS}$ such that
	\[
		\sqrt{n} \vec \big\{\hS_n(\X_n; \bT_n) - S(F)\big\} \cid N_{p^2}\left(\bNull, W_{S}\right).
	\]
\item \label{S.hat 3}
	If additionally $F \in \Ee_2(\bmu, V)$, then 
	\[
		\displaystyle W_{S} = \frac{-\lambda_1\lambda_2+\frac{1}{2}\sqrt{\lambda_1\lambda_2}(\lambda_1+\lambda_2)}{(\lambda_1-\lambda_2)^2} (U \otimes U) W_0 (U \otimes U)^T
		\] with		
			\[
			W_0 =
			\begin{pmatrix}
			1 & 0 & 0 & -1 \\
			0 & 1 & 1 &  0 \\
			0 & 1 & 1 &  0 \\
			-1 & 0 & 0 & 1 \\
		\end{pmatrix}.
	\]
\end{enumerate}
\end{proposition}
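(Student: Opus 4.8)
The plan is to treat the three parts in turn, letting the boundedness and orthogonal equivariance of the spatial sign, together with the elliptical symmetry, do most of the work. For part (\ref{S.hat 1}), I would first fix the location at $\bmu=\bmu(F)$: since the entries of $\bs(\bx-\bmu)\bs(\bx-\bmu)^T$ are bounded by $1$, the strong law of large numbers gives $\hS_n(\X_n;\bmu)\asc S(F)$ immediately. The only issue is replacing $\bmu$ by the random $\bT_n\asc\bmu$. The map $\dt\mapsto\bs(\bx-\dt)\bs(\bx-\dt)^T$ is continuous for $\dt\neq\bx$ with a gradient of order $\|\bx-\dt\|^{-1}$, so the assumption $\E\|\bX-\bmu\|^{-1}<\infty$ makes $\dt\mapsto\E\,\bs(\bX-\dt)\bs(\bX-\dt)^T$ continuous at $\bmu$ by dominated convergence and controls the increments of the empirical average. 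This yields a uniform strong law $\sup_{\|\dt-\bmu\|\le\epsilon}\|\hS_n(\X_n;\dt)-\E\,\bs(\bX-\dt)\bs(\bX-\dt)^T\|\asc 0$ over a small neighbourhood, which combined with $\bT_n\asc\bmu$ gives part (\ref{S.hat 1}).

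For part (\ref{S.hat 2}), I would write $Z_i(\dt)=\vec\{\bs(\bX_i-\dt)\bs(\bX_i-\dt)^T\}$, set $g(\dt)=\E Z(\dt)$ (so $g(\bmu)=\vec S(F)$), and decompose
\[
\sqrt n\,\vec\{\hS_n(\X_n;\bT_n)-S(F)\}=\frac{1}{\sqrt n}\sum_{i=1}^n\{Z_i(\bmu)-g(\bmu)\}+\sqrt n\,\{g(\bT_n)-g(\bmu)\}+R_n,
\]
where $R_n=\tfrac{1}{\sqrt n}\sum_i[\{Z_i(\bT_n)-Z_i(\bmu)\}-\{g(\bT_n)-g(\bmu)\}]$ is the centred fluctuation. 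The first term is a sum of i.i.d.\ bounded vectors, so the multivariate central limit theorem gives asymptotic normality with covariance $W_S=\operatorname{Cov}\{Z(\bmu)\}$. For the drift term, note that $\bx\mapsto\bs(\bx)\bs(\bx)^T=\bx\bx^T/\|\bx\|^2$ is even, hence its gradient is odd; the symmetry assumption $(\bX-\bmu)\eid-(\bX-\bmu)$ therefore forces the Jacobian $\nabla g(\bmu)=\E\{\partial_{\dt}Z(\bmu)\}$ to vanish, so that $g(\bT_n)-g(\bmu)=o(\|\bT_n-\bmu\|)$ and, since $\sqrt n\|\bT_n-\bmu\|=O_P(1)$, the drift term is $o_P(1)$.

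The remaining and main obstacle is $R_n=o_P(1)$, which I would obtain from stochastic equicontinuity of $\dt\mapsto\tfrac{1}{\sqrt n}\sum_i\{Z_i(\dt)-g(\dt)\}$ at $\bmu$. Because $\bs$ is non-Lipschitz at the origin, one bounds the increment by $\|Z_i(\dt)-Z_i(\bmu)\|\le C\min\{1,\|\dt-\bmu\|/\|\bX_i-\bmu\|\}$ and splits the resulting second moment at $\|\bX-\bmu\|=\|\dt-\bmu\|$; this gives $\sup_{\|\dt-\bmu\|\le\delta}\E\|Z(\dt)-Z(\bmu)\|^2=O(\delta^{3/2})$, the exponent $3/2$ being \emph{exactly} what the stronger moment assumption $\E\|\bX-\bmu\|^{-3/2}<\infty$ delivers. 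The Hölder modulus $\rho(\delta)=O(\delta^{3/4})$, together with the low-dimensional (Euclidean) index set, yields asymptotic equicontinuity and hence $R_n=O_P(\rho(\|\bT_n-\bmu\|))=o_P(1)$, completing part (\ref{S.hat 2}) with $W_S=\operatorname{Cov}\{Z(\bmu)\}$ independent of the particular root-$n$ location estimator.

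For part (\ref{S.hat 3}) I would exploit two symmetries to evaluate $W_S=\operatorname{Cov}\{Z(\bmu)\}$. First, orthogonal equivariance $\bs(U^T(\bx-\bmu))=U^T\bs(\bx-\bmu)$ gives $\vec\{\bs\bs^T\}=(U\otimes U)\vec\{\bs(U^T(\bX-\bmu))\bs(U^T(\bX-\bmu))^T\}$, so it suffices to compute $W_0$ for diagonal $V=\Lambda=\diag(\lambda_1,\lambda_2)$ and conjugate by $(U\otimes U)$, producing the stated sandwich. Second, having reduced to diagonal $V$, I would use the representation $\bX-\bmu\eid R\,\Lambda^{1/2}(\cos\Psi,\sin\Psi)^T$ with $\Psi$ uniform on $[0,2\pi)$ and $R\ge 0$ independent; because $\bs$ is scale-free, $R$ drops out and the sign reduces to the single-angle function $\bs=(\sqrt{\lambda_1}\cos\Psi,\sqrt{\lambda_2}\sin\Psi)/(\lambda_1\cos^2\Psi+\lambda_2\sin^2\Psi)^{1/2}=:(s_1,s_2)$. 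The structure of $W_0$ then follows from parity and a constraint: the cross-covariances between the diagonal entries $s_1^2,s_2^2$ and the off-diagonal entry $s_1 s_2$ vanish because the latter is odd under $\Psi\mapsto-\Psi$, while $s_1^2+s_2^2=1$ forces $\operatorname{Var}(s_1^2)=\operatorname{Var}(s_2^2)=-\operatorname{Cov}(s_1^2,s_2^2)$. The only substantive computation is to verify $\operatorname{Var}(s_1^2)=\operatorname{Var}(s_1 s_2)$ and to evaluate this common value: both are elementary trigonometric integrals over $\Psi$, reducible to rational integrals via $t=\tan\Psi$, and they yield the scalar $\{-\lambda_1\lambda_2+\tfrac12\sqrt{\lambda_1\lambda_2}(\lambda_1+\lambda_2)\}/(\lambda_1-\lambda_2)^2$, giving the claimed form of $W_S$.
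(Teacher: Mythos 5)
Your proposal is correct, but it reaches the result by a partly different route than the paper. For parts (\ref{S.hat 1}) and (\ref{S.hat 2}) the paper does not argue at all: it cites Theorem~3 of D\"urre et al.\ (2014), whose only contribution used here is the identification $W_S=\Cov\big(\vec\{\bs(\bX-\bmu)\bs(\bX-\bmu)^T\}\big)$. You instead sketch a self-contained proof (uniform SLLN over a shrinking neighbourhood for consistency; CLT plus a drift term killed by the symmetry $(\bX-\bmu)\eid-(\bX-\bmu)$ plus a stochastic-equicontinuity remainder controlled by the $\E\|\bX-\bmu\|^{-3/2}$ moment), which is essentially the structure of the cited external proof and correctly explains why each hypothesis is needed; this buys self-containedness at the cost of length, and a fully rigorous version would still have to justify $g(\dt)-g(\bmu)=o(\|\dt-\bmu\|)$ (the second derivative of $\bv\mapsto\bv\bv^T/\|\bv\|^2$ is of order $\|\bv\|^{-2}$, which is not integrable, so one must split at $\|\bX-\bmu\|\asymp\|\dt-\bmu\|$ rather than Taylor-expand globally). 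For part (\ref{S.hat 3}) both arguments conjugate by $U\otimes U$ to reduce to diagonal $V$, but the paper then evaluates three separate expectations $\E\{s_1^4\}$, $\E\{s_2^4\}$, $\E\{s_1^2s_2^2\}$ by contour integration and the residue theorem, whereas you exploit $s_1^2+s_2^2=1$ and the parity $\Psi\mapsto-\Psi$ to pin down the entire pattern of $W_0$ and reduce the work to the single real integral $\E\{s_1^2s_2^2\}$. That is a genuine economy; the one point worth flagging is that the remaining identity $\operatorname{Var}(s_1^2)=\operatorname{Var}(s_1s_2)$, equivalently $\E\{s_1^2s_2^2\}=\tfrac12\delta_1\delta_2$, is \emph{not} structural (it fails for general angular distributions) and must indeed be verified by that computation, as you acknowledge; its value matches the paper's $I_3=\{-\lambda_1\lambda_2+\tfrac12\sqrt{\lambda_1\lambda_2}(\lambda_1+\lambda_2)\}/(\lambda_1-\lambda_2)^2$, and combined with $\delta_j=\sqrt{\lambda_j}/(\sqrt{\lambda_1}+\sqrt{\lambda_2})$ from Proposition~\ref{prop:S} one checks $\delta_1\delta_2=2I_3$, so your claimed simplification is consistent with the stated $W_0$.
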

Parts (\ref{S.hat 1}), (\ref{S.hat 2}) are proved in \citet{Duerre2014}, where also alternative assumptions to guarantee strong consistency and asymptotic normality are given, The mentioned regularity conditions are rather mild, they are fulfilled for elliptical distributions with bounded density. 


\section{A spatial sign based correlation estimator}
\label{sec:correlation}

In the following let $\bX_i=(X_i,Y_i)^T$, $i=1,\ldots,n$, be an i.i.d.\ sample from $F \in \Ee_2(\bmu,V)$. Denoting the entries of $V$ by $v_{ij}$, we want to estimate the parameter
\[
	\rho = v_{12}/\sqrt{v_{11}v_{22}}.
\]
We call $\rho$ the \emph{generalized correlation coefficient} of the elliptical distribution $F$, since it coincides with the correlation coefficient if second moments are finite.
In a slight abuse of notation, we will refer to $\rho$ simply as the correlation (coefficient) of $F$ in the following. Propositions \ref{prop:S} and \ref{prop:S.hat} from the previous section give rise to an estimator of $\rho$ constructed as follows: compute the SSCM $\hat{S}_n = \hat{S}_n (\X_n ; \hat{\bmu}_n)$, perform an eigenvalue decomposition $\hS_n = \hat{U}_n \hat\Delta_n \hat{U}_n^T$ with 
$\hat\Delta_n = \diag(\hat\delta_1,\hat\delta_2)$ and compute the matrix $\hat{V}_n = \hat{U}_n \hat\Lambda_n \hat{U}_n^T$ with $\hat\Lambda_n = \diag(\hat\lambda_1,\hat\lambda_2)$ and $\hat\lambda_1 = \hat\delta_1/\hat\delta_2$, \, $\hat\lambda_2 = \hat\delta_2/\hat\delta_1$.\footnote{The overall scaling of $\hat{V}_n$ is, of course, irrelevant for the correlation, and its eigenvalues 
 $\hat\lambda_1$ and $\hat\lambda_2$ may as well be chosen differently. Their ratio has to satisfy $\hat\lambda_1/\hat\lambda_2 = (\hat\delta_1/\hat\delta_2)^2$.}
Finally compute the correlation coefficient from the matrix $\hat{V}_n$, i.e.\ let 
$\hat\rho_n = \hat{v}_{12}/\sqrt{\hat{v}_{11}\hat{v}_{22}}$. 
In dimension two, the eigenvalue decomposition can be computed explicitly with justifiable effort, and we obtain the following explicit expression for the thus defined estimator:
\[
	\hat{\rho}_n =\frac{c\hs_{12} b}{\sqrt{(\hs_{12}^2+b^2)^2+(\hs_{12}cb)^2}},
\]
where
\be \label{eq:cdb}
	c=\frac{2d-1}{d(1-d)}, \quad  d=\frac{1}{2}+\sqrt{(\hs_{11}-\frac{1}{2})^2+\hs_{12}^2},\quad  b=d-\hs_{11}
\ee
and $\hat{s}_{ij}$ denote the entries of $\hat{S}_n$. We call $\hat\rho_n$ the \emph{spatial sign correlation coefficient}. This must not be confused with the correlation of the spatial signs of the observations. This would be $\hat\rho_{\rm SSCM} = \hat{s}_{12}/\sqrt{\hat{s}_{11}\hat{s}_{22}}$. Also note that  knowing $\hat\rho_{\rm SSCM}$ alone is not sufficient for computing $\hat\rho_n$. Despite the rather lengthy definition of $\hat\rho_n$, its asymptotic variance has a surprisingly simple form. 
\begin{proposition} \label{prop:Skor}Let $F \in \Ee_2(\bmu, V)$ have a bounded density at $\bmu$. Then, as $n \to \infty$, 
\begin{enumerate}[(1)]
\item \label{Skor1}
	$\hat{\rho}_n \asc  \rho$, and
\item \label{prop:Skor:2}
	$\displaystyle \sqrt{n} (\hat{\rho}_n - \rho)\cid N\left(0, \, (1-\rho^2)^2 + \frac{1}{2}\left(a+a^{-1}\right)(1-\rho^2)^{3/2}\right),$ \
	where  \\ $a=\sqrt{v_{11}/v_{22}}$ is the root of the ratio of the diagonal elements of $V$.
\end{enumerate}
\end{proposition}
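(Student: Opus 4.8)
The plan is to read off both parts from Proposition~\ref{prop:S.hat}, treating $\hat\rho_n$ as a fixed function of $\hat S_n$ and then applying the continuous mapping theorem and the delta method. First I would check the hypotheses of Proposition~\ref{prop:S.hat}. In dimension two, a bounded density at $\bmu$ gives $\E\|\bX-\bmu\|^{-1}<\infty$ and $\E\|\bX-\bmu\|^{-3/2}<\infty$, because in polar coordinates $\E\|\bX-\bmu\|^{-\alpha}$ reduces to $\int_0^\varepsilon r^{1-\alpha}\,dr$, which is finite for $\alpha<2$; the symmetry $(\bX-\bmu)\eid-(\bX-\bmu)$ is automatic for elliptical $F$; and with $\bT_n=\hat{\bmu}_n$ the empirical spatial median, the same regularity yields $\hat{\bmu}_n\asc\bmu$ and $\sqrt n\|\hat{\bmu}_n-\bmu\|$ convergent in distribution. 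Hence Proposition~\ref{prop:S.hat} is available.

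For Part~(\ref{Skor1}) I would use that the spatial signs have unit length, so $\operatorname{tr}\hat S_n=1$ and $\hat s_{22}=1-\hat s_{11}$ hold identically; thus $\hat\rho_n=g(\hat s_{11},\hat s_{12})$ for the displayed $g$. Since $\hat S_n\asc S(F)$ by Proposition~\ref{prop:S.hat}(\ref{S.hat 1}), it suffices that $g$ be continuous at the entries $(s_{11},s_{12})$ of $S(F)$ with value $\rho$. Continuity is cleanest to see through the construction: whenever $S(F)$ has distinct eigenvalues ($\lambda_1>\lambda_2$), the eigenvalue decomposition followed by reconstruction of $\hat V_n$ is continuous (the reconstructed matrix being insensitive to the sign ambiguity of the eigenvectors), and the limit is $\rho$ by Proposition~\ref{prop:S}(\ref{S 3}), which guarantees $\hat\lambda_1/\hat\lambda_2=(\delta_1/\delta_2)^2=\lambda_1/\lambda_2$ and hence $\hat V_n\propto V$. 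The one repeated-eigenvalue case, $V\propto I$ (equivalently $\rho=0$, $a=1$), I would handle by a direct limiting argument, since the explicit formula, although of indeterminate form $0/0$ there, extends continuously to $\rho=0$.

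For Part~(\ref{prop:Skor:2}) I would use the delta method. The identity $\hat s_{22}=1-\hat s_{11}$ reduces matters to $(\hat s_{11},\hat s_{12})^T$, whose limiting covariance $\Sigma$ is the $2\times2$ submatrix of $W_S$ in the first two $\vec$-coordinates. To make $\Sigma$ explicit I would write $W_0=z_1z_1^T+z_2z_2^T$ with $z_1=(1,0,0,-1)^T$ and $z_2=(0,1,1,0)^T$, and use $(U\otimes U)\vec M=\vec(UMU^T)$, giving $W_S=\kappa\,(w_1w_1^T+w_2w_2^T)$ with $w_1=\vec(U(E_{11}-E_{22})U^T)$ and $w_2=\vec(U(E_{12}+E_{21})U^T)$, where $E_{ij}$ are the $2\times2$ matrix units and $\kappa=\big(-\lambda_1\lambda_2+\tfrac12\sqrt{\lambda_1\lambda_2}(\lambda_1+\lambda_2)\big)/(\lambda_1-\lambda_2)^2$. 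The delta method then gives $\sqrt n(\hat\rho_n-\rho)\cid N\!\left(0,\nabla g^T\Sigma\,\nabla g\right)$, with $\nabla g$ evaluated at $(s_{11},s_{12})$.

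The main obstacle is the final algebraic collapse of $\nabla g^T\Sigma\,\nabla g$ to the stated form. I would compute $\nabla g$ from the explicit formula, substitute $\delta_j=\sqrt{\lambda_j}/(\sqrt{\lambda_1}+\sqrt{\lambda_2})$ to express $S(F)$ and $\nabla g$ through $\lambda_1,\lambda_2,U$, and then convert the eigen-parameters into coordinate quantities via $\det V=v_{11}v_{22}(1-\rho^2)$, $\lambda_1+\lambda_2=v_{11}+v_{22}$, and $a=\sqrt{v_{11}/v_{22}}$, whereupon the $(\lambda_1,\lambda_2,U)$ dependence should reorganize into a function of $\rho$ and $a$ alone, namely $(1-\rho^2)^2+\tfrac12(a+a^{-1})(1-\rho^2)^{3/2}$. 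Along the way the invariances $a\mapsto a^{-1}$ (swapping the two coordinates) and $\rho\mapsto-\rho$ (a sign flip), both respected by the target, serve as checks, and the excluded point $V\propto I$ is recovered by continuity of the final expression in $(\rho,a)$.
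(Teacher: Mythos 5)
Your overall strategy coincides with the paper's: verify that a bounded density supplies the moment and location-estimator conditions of Proposition \ref{prop:S.hat}, reduce to the two free coordinates $(\hat{s}_{11},\hat{s}_{12})$ via $\hat{s}_{22}=1-\hat{s}_{11}$, and apply the continuous mapping theorem and the delta method with the covariance $W_S$ of Proposition \ref{prop:S.hat}(\ref{S.hat 3}). The one substantive divergence is in how the Jacobian is obtained. You propose to differentiate the explicit forward formula $g(\hat{s}_{11},\hat{s}_{12})$ directly and then force the ``algebraic collapse'' of $\nabla g^T\Sigma\,\nabla g$; the paper instead routes through the intermediate estimator $\hat{V}_{0,n}$ of Proposition \ref{prop:V_0} and, observing that the forward map $H$ is too unwieldy to differentiate comfortably, computes the derivative of its \emph{inverse} $J:(a,\rho)\mapsto(s_{11},s_{12})$ --- available in closed form from Proposition \ref{prop:S}(\ref{S 3}) applied to the eigendecomposition of $V_0$ --- and then invokes the inverse function theorem, $\mathsf{D}H=(\mathsf{D}J)^{-1}$. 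That device is precisely what makes the final simplification to $(1-\rho^2)^2+\tfrac12(a+a^{-1})(1-\rho^2)^{3/2}$ tractable by hand; your plan is correct in principle (your rank-one decomposition of $W_0$ and the trace/determinant identities are the right ingredients), but it leaves the decisive computation as an acknowledged obstacle rather than resolving it. Your treatment of the preliminaries is, if anything, more careful than the paper's: the moment bounds deduced from the bounded density, the continuity of the eigendecomposition-and-reconstruction at distinct eigenvalues, and the $0/0$ analysis of the explicit formula near $V\propto I$ are all correct, and the degenerate case $\lambda_1=\lambda_2$ is one the paper also handles only implicitly, since its formula for $W_S$ and its map $J$ are derived under $\lambda_1\neq\lambda_2$.
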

Proposition \ref{prop:Skor} (\ref{prop:Skor:2}) gives the asymptotic variance $ASV(\hat\rho_n)$ as a function of the true correlation $\rho$ and the ratio of the diagonal elements of the shape matrix $V$. The elliptical generator $g$, cf.~(\ref{density}), does not enter, which may be phrased as ``$\hat\rho_n$ is asymptotically distribution-free within the elliptical model''. It is furthermore consistent and asymptotically normal without any moment condition.

For fixed $\rho$, the asymptotic variance $ASV(\hat\rho_n)$ is minimal for equal marginal variances, but can get arbitrarily large for heteroscedastic data. It is therefore advisable to apply this estimator to standardized data, i.e.\ the components should be divided beforehand by a scale measure to yield equally dispersed margins. Margin-wise standardization generally should be administered with caution in multivariate data analysis, since it  changes the shape, e.g., the direction of the eigenvectors, and will alter the results of, e.g., a principal component analysis. The inefficiency of the spatial sign covariance matrix at strongly ``shaped'', i.e.\ non-spherical, distributions has led to criticism regarding its use for robust principal component analysis, where a strong ``shapedness'' is the working assumption, cf.\ e.g.\ Remark~5.1 in \citet{Bali2011}. 
We define shapedness as deviation from sphericity (and measure it for instance by the condition number of $V$). There are two sources that contribute to the shapedness: collinearity and heteroscedasticity. The formula in Proposition \ref{prop:Skor} (\ref{prop:Skor:2}) nicely visualizes the individual influences of these two sources of shapedness on the asymptotic variance of $\hat\rho_n$. Since we are interested in correlation --- a function of the shape that is invariant with respect to margin-wise scale changes ---, we can avoid the inefficiency due to the heteroscedasticity by margin-wise standardisation. 

Technical conditions that ensure the asymptotic equivalence of such a two-step procedure 
to the spatial sign correlation estimation at spherical distributions are yet to be established, but by heuristic arguments we can work for all practical purposes with an asymptotic variance of  
\[
	ASV(\hat\rho_n)	 = (1-\rho^2)^2 + (1-\rho^2)^{3/2}.
\]
In light of robustness, we recommend to use a highly robust scale estimator for standardization, such as the MAD or the $Q_n$ (see also next Section). Both have a breakdown point of 1/2, a property which they share with the spatial sign covariance matrix \citep{Croux2010}. The thus obtained two-stage correlation estimator is highly robust, but we refrain from considering breakdown points of correlation estimators, see the discussion and the rejoinder of \citet{Davies2005}.

In the next section we will compare several correlation estimators with respect to their efficiency at the normal model. As a first glimpse in this direction, we recall the asymptotic variance of the Pearson correlation $\hat\rho_{\mathrm{Pea}}$ at elliptical distributions
\[
 	ASV(\hat\rho_{\mathrm{Pea}}) = \left( 1 + \frac{\kappa}{3}\right) \left( 1 - \rho^2\right)^2,
\]
where $\kappa$ is the excess kurtosis of the components of $F$. The asymptotic relative efficiency of $\hat\rho_n$ with respect to 
$\hat\rho_{\rm Pea}$,
\[
	ARE(\hat\rho_n,\hat\rho_{\rm Pea}) = \frac{	ASV(\hat\rho_{\mathrm{Pea}})}{ASV(\hat\rho_n)} = \frac{1 + \kappa/3}{1 + \frac{1}{2}(a+a^{-1})(1-\rho^2)^{-1/2}},
\]
is depicted in Figure~\ref{fig:1}.
\begin{figure}
\begin{center}
\includegraphics[width=0.6\textwidth]{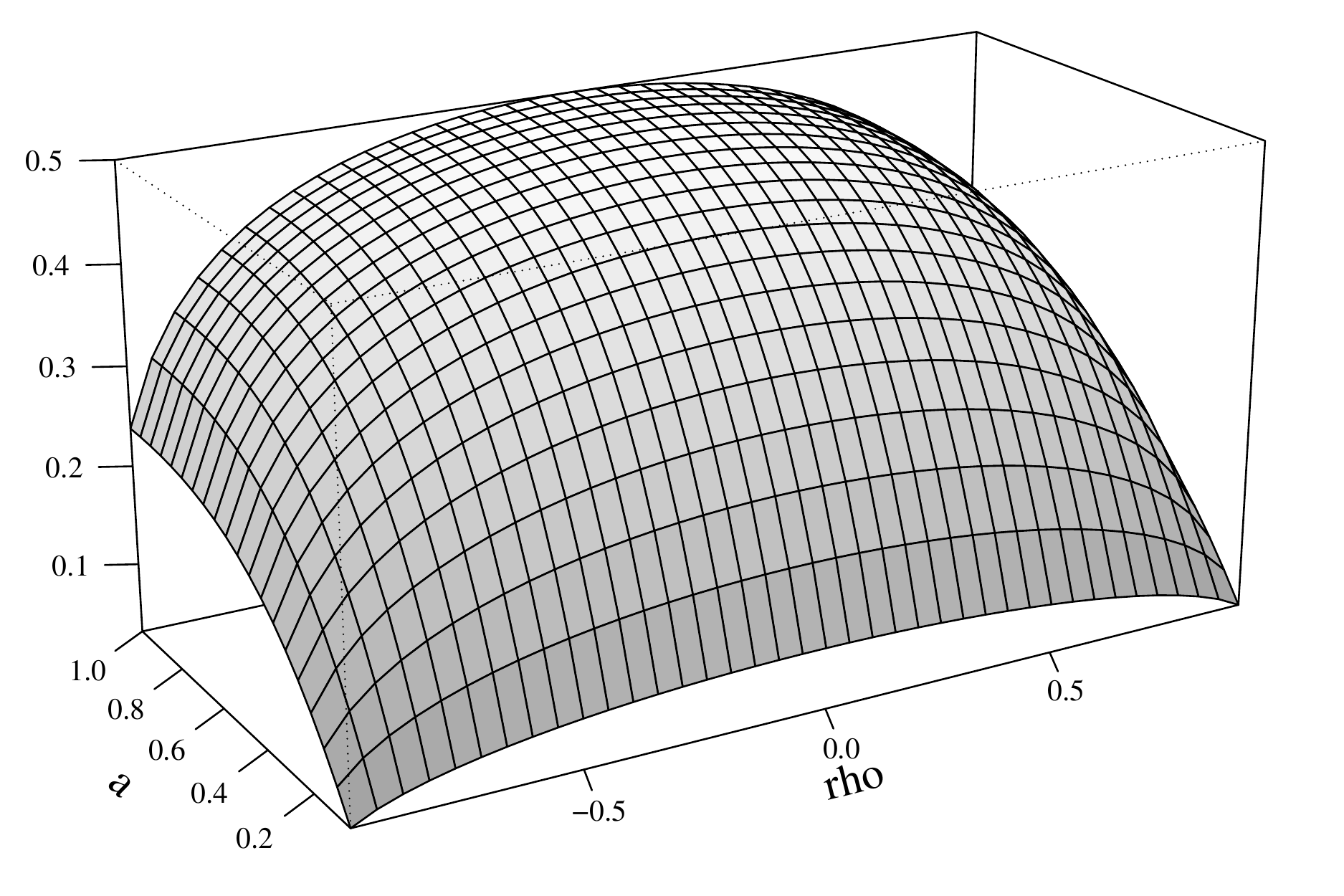}
\caption[efficiency]{The asymptotic relative efficiency of $\hat{\rho}$ with respect to the empirical correlation under normality as a function of $\rho$ and $a = \sqrt{v_{11}/v_{22}}$.}
\label{fig:1}
\end{center}
\end{figure}

At normality, the maximum 1/2 is attained for $a=1$ and $\rho =0$. If we fix $a =1$,
the asymptotic relative efficiency declines with increasing $|\rho|$, even tending to 0 for $|\rho|\rightarrow 1$. But it declines very slowly, for $|\rho| < 0.7 $ it stays above 0.4. 
Under heavy-tailed distributions, however, the spatial sign correlation can be more efficient than the Pearson correlation. Specifically, 	$ARE(\hat\rho_n,\hat\rho_{\rm Pea}) \geq 1$ if $\kappa \geq (3/2)(a+a^{-1})/\sqrt{1-\rho^2}$. For instance, with the kurtosis of the $t_\nu$ distribution being $6/(\nu - 4)$, the spatial sign correlation is more efficient at the bivariate spherical $t_\nu$ distribution for $\nu < 6$.

In the remainder of this section, we examine the influence function of the spatial sign correlation. 
The influence function is based on the notion that estimators are statistical functionals working on distributions. The specific estimate computed from the data set $\X_n$ is then the functional applied to the corresponding empirical distribution. 
We use $\hat{S}$ and $\hat\rho$ to denote the statistical functionals corresponding to the SSCM and the spatial sign correlation, respectively. The influence function $IF(\bx,\hat\rho,F$) describes the effect of an infinitesimal small contamination at point $\bx$ on the functional $\hat\rho$ if the latter is evaluated at distribution $F$. It is an important tool describing the robustness properties of estimators. For a precise definition, interpretation and further details, see, e.g., \citet{Hampel1986} or \cite{Maronna2006}. 
 
 \cite{Croux2010} give the influence function of the off-diagonal element of the SSCM for $p=2$. Calculation of the diagonal elements is straightforward, and we obtain for $F \in \Ee_2(\bmu, V)$:
\[
	IF(\bx,\hat{S},F) = \bx\bx^T/(\bx^T\bx)-S(F).
\] 
Applying the chain rule and using the derivatives calculated in the proof of Proposition \ref{prop:V_0} in the appendix, we arrive at the influence function of the spatial sign correlation. 
 
\begin{proposition} \label{prop:IF} Let $F \in \Ee_2(\bmu, V)$. Then 
	$ IF(\bx,\hat{\rho},F) = $
\[
  \textstyle
	{
	{	-\left\{\left(a^2+1\right)\,\rho\,\sqrt{1-\rho^2}+2\,a\,\rho\, \left(1-\rho^2\right)\right\}\,\left(a^2\,x_2^2+x_1^2\right) - 
		\left\{ \left(  a^4+6\,a^2+1\right)\,\left(\rho^2-1\right)+2\,a\,\left(a^2+1\right)
 		\,\sqrt{1-\rho^2}\,\left(\rho^2-2\right)\right\} \,x_1\,x_2
 	}
 \over{
 		\left\{2 \,a^2\,\sqrt{1-\rho^2}+a\,\left(a^2+1\right)\right\}\,\left(x_2^2+x_1^2 \right)
 	}},
\]
where $\bx = (x_1,x_2)^T$ and $a$ and $\rho$ are as in Proposition \ref{prop:Skor}.
 \end{proposition}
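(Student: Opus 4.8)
The plan is to treat $\hat\rho$ as a deterministic, smooth function of the spatial sign covariance matrix and to recover its influence function from that of $\hat S$ by the chain rule; since correlation is translation invariant I take $\bmu=\bNull$ throughout. The key structural observation is that $\bs(\bx)^T\bs(\bx)=1$ for every $\bx\neq\bNull$, so $S(F)$ has trace one and $s_{22}=1-s_{11}$. Consequently the explicit estimator built from $c$, $d$, $b$ in (\ref{eq:cdb}) is genuinely a function of only the two coordinates $s_{11}$ and $s_{12}$; writing $\hat\rho=g(s_{11},s_{12})$, the chain rule for influence functions gives
\[
  IF(\bx,\hat\rho,F)
  = \frac{\partial g}{\partial s_{11}}\,IF(\bx,\hat s_{11},F)
  + \frac{\partial g}{\partial s_{12}}\,IF(\bx,\hat s_{12},F),
\]
with both partial derivatives taken at the population point $S(F)$.

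Into this I would substitute the two scalar influence functions read off from the matrix identity $IF(\bx,\hat S,F)=\bx\bx^T/(\bx^T\bx)-S(F)$ stated above, namely
\[
  IF(\bx,\hat s_{11},F)=\frac{x_1^2}{x_1^2+x_2^2}-s_{11},
  \qquad
  IF(\bx,\hat s_{12},F)=\frac{x_1 x_2}{x_1^2+x_2^2}-s_{12}.
\]
Over the common denominator $x_1^2+x_2^2$ the numerator becomes a homogeneous quadratic form in $(x_1,x_2)$, which is already the structure of the displayed answer. That the full expression integrates to zero against $F$ --- a property shared by every influence function --- gives a convenient check on the constant contribution coming from $-s_{11}$ and $-s_{12}$.

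It remains to compute and evaluate the gradient of $g$. I would differentiate $g$ explicitly --- these are precisely the derivatives assembled in the proof of Proposition \ref{prop:V_0} in the appendix, so I would quote them rather than reproduce them --- and then evaluate at $S(F)$. For that evaluation I would invoke Proposition \ref{prop:S} to express the population entries $s_{11},s_{12}$, and hence $d$, $b$ and $c$ at the population point, through the shape parameters, equivalently through $a=\sqrt{v_{11}/v_{22}}$ and $\rho$; each then collapses to a rational expression in $a$, $\rho$ and $\sqrt{1-\rho^2}$. Substituting these into the two partial derivatives and clearing denominators yields the stated formula.

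The conceptual part is thus short, and the main obstacle is the algebraic weight of this last step. Because $g$ carries nested square roots through $d$, its partial derivatives are cumbersome, and reducing them --- once evaluated at $S(F)$ and rewritten in the $(a,\rho)$ parametrisation --- to the compact rational form of the proposition requires lengthy but routine manipulation, most naturally carried out with computer algebra. A secondary point to verify is that $g$ is differentiable at $S(F)$, which holds whenever $V$ is non-spherical: one has $S(F)=\tfrac12 I$ exactly when $\lambda_1=\lambda_2$, and it is precisely this non-spherical regime in which the influence function is of interest.
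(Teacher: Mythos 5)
Your proposal follows the paper's own route exactly: the paper likewise obtains $IF(\bx,\hat{\rho},F)$ by applying the chain rule to $IF(\bx,\hat{S},F)=\bx\bx^T/(\bx^T\bx)-S(F)$, using the partial derivatives of the map $(s_{11},s_{12})\mapsto(a,\rho)$ computed (via the inverse function theorem) in the proof of Proposition \ref{prop:V_0} and evaluating them at the population SSCM expressed through $a$ and $\rho$ by Proposition \ref{prop:S}. Your supplementary remarks --- the trace-one reduction to the two coordinates $(s_{11},s_{12})$, the mean-zero check, and differentiability away from sphericity --- are correct and merely make explicit what the paper leaves implicit.
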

%
The influence function for $a=1$ and $\rho=0$ is illustrated in Figure \ref{figinflu} on the right. It has a discontinuity at the origin and is bounded. Its extreme values $\pm 2$ are attained on the diagonals. Furthermore, $IF(\bx,\hat{\rho},F)$ is bounded in $\bx$ for any fixed values $a$ and $\rho$, but it may get arbitrarily large as $a$ varies. A robustness index that is derived from the influence function is the gross-error sensitivity (GES), defined as
\[
	GES(\hat\rho,F) = \sup_{\bx \in \R^2} \left| IF(\bx,\hat{\rho},F) \right|.
\]
For $a =1$, we obtain
\[
 	\textstyle
	GES(\hat{\rho},F) = 
	{{\left\{\left(\rho^2-1\right)\,\left(-\rho^4+8\,
 \rho^2+4\,\sqrt{1-\rho^2}\,\left(\rho^2-2\right)-8\right)\right\}^{1/2}+|\rho|\,
 \left(\sqrt{1-\rho^2}-\rho^2+1\right)}\over{\sqrt{1-\rho^2}+1}}.
\]
which is depicted in Figure \ref{figinflu} (left).
\citet{CrouxDehon2010} compute the gross-error sensitivities of several nonparametric correlation measures at bivariate normal distributions. Figure \ref{figinflu} (left) corresponds to their Figure 2, complemented by the GES curve of the spatial sign correlation.  The GES is small for any $\rho$, indicating a good robustness against small amounts of outliers. 
We refrain from stating the GES for arbitrary $a$ and $\rho$ explicitly since the formula is rather lengthy. 
\begin{figure}
\begin{center}
\includegraphics[width=0.95\textwidth]{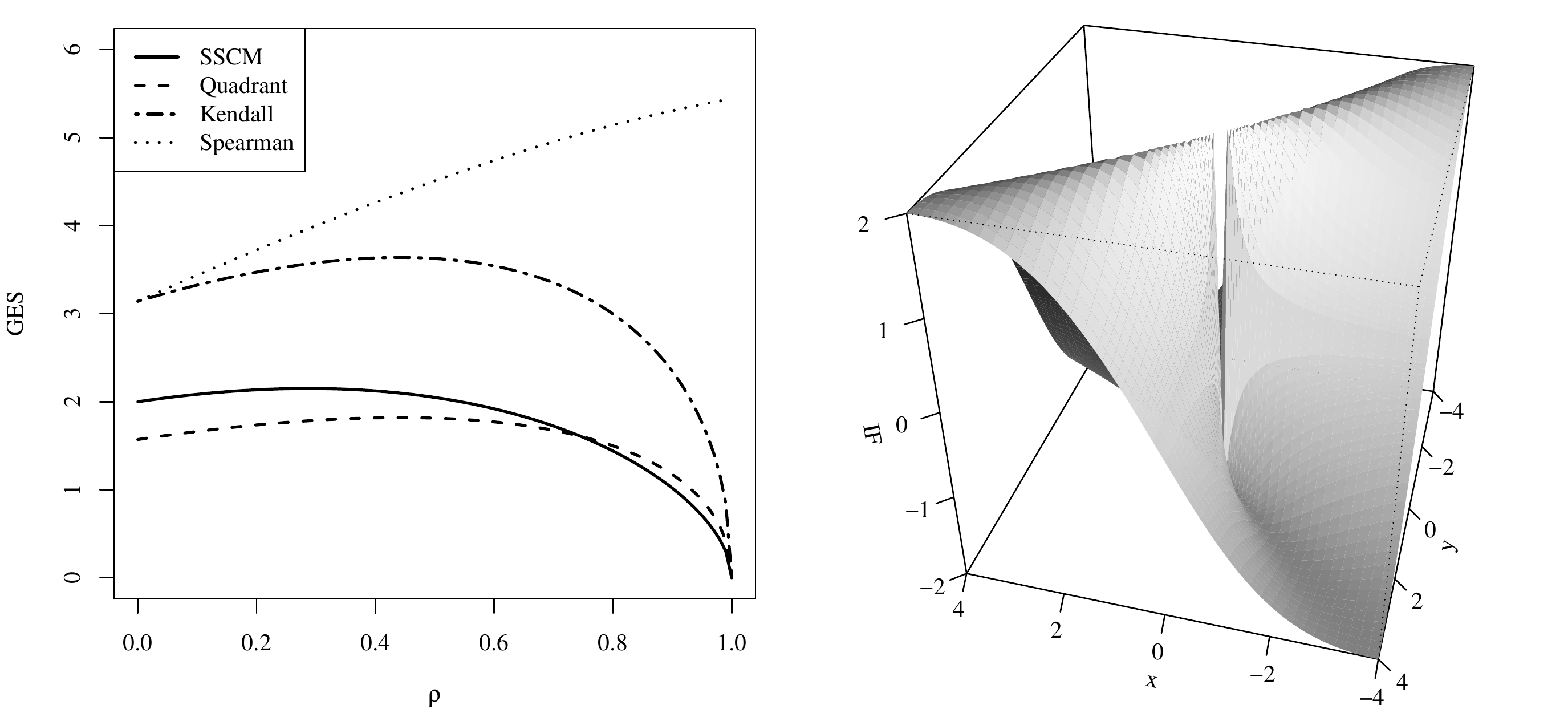}
\end{center}
\vspace{-2.0ex}
\caption{GES for the spatial correlation compared to other popular nonparametric correlation estimators under equal marginal variances (left) and influence function of the spatial correlation for $\rho=0$ and $a=1$ (right).}
\label{figinflu}
\end{figure}



\section{Analytical comparison of robust correlation estimators}
\label{sec:anal.comp}

There are many proposals for robust correlation estimators in the literature. In the second part of this exposition, consisting of Sections \ref{sec:anal.comp} and \ref{sec:num.comp}, we compare the spatial sign correlation $\hat\rho_n$ to a number of prominent alternatives, without claiming or attempting any completeness or ranking. In Section \ref{sec:anal.comp}, we gather the basic analytic results, particularly the asymptotic efficiencies at the normal model, and in Section \ref{sec:num.comp} we compare the finite-sample and robustness properties numerically. 

It is important to note that --- in general --- the estimators mentioned are known to be Fisher-consistent for the correlation only under normality, which often --- as in the case of the spatial sign correlation --- can be relaxed to ellipticity. To put it differently, each of the various correlation estimators\footnote{In this sense, ``correlation'' is understood as monotone dependence.} 
$\hat\theta_n$ estimates some parameter $\theta$ of the bivariate population distribution, which may serve as a measure of monotone dependence, but does in general not coincide with the moment correlation $\rho$. The exact functional connection between $\theta$ and $\rho$ is usually hard to assess for arbitrary distributions, but is known for important subclasses, such as the normal model. If no such function is mentioned in the examples below, it is the identity. 

Let the data be denoted by $\bX_i=(X_i,Y_i)^T$, $i=1,\ldots,n$, independent and normally distributed. Relative efficiencies reported below
are with respect to the sample correlation, which is denoted by $\hat\rho_{\rm Pea}$.
The estimators we will consider can roughly be divided into three groups: We call the first group \emph{nonparametric estimators} since they depend on signs and ranks. Besides the spatial sign correlation, these are the Gaussian rank correlation, Spearman's $\rho$, Kendall's $\tau$, and the quadrant correlation. 
The second group are the \emph{Gnanadesikan-Kettenring-type estimators}, where we consider the $\tau$-scale and the $Q_n$ as scale estimators. We label the third group \emph{affine equivariant estimators}, i.e.\ estimators that are derived from an affine equivariant two-dimensional scatter estimator. Here we consider Tyler's M-estimator, raw and reweighted MCD, the Stahel-Donoho-estimator and the S-estimator with Tukey's biweight-function. The estimators in detail:

\subsection{Nonparametric estimators}
The Gaussian rank correlation is defined as the sample correlation of the normal scores of the data, i.e.\ 
\[
	\hat{\rho}_{\rm GRK}=\frac{1}{c_n} \sum_{i=1}^n \Phi^{-1}\left( \frac{R(X_i)}{n+1}\right)\Phi^{-1}\left(\frac{R(Y_i)}{n+1}\right),
\]
where $c_n=\sum_{i=1}^n\Phi^{-1}\left(\frac{i}{n+1} \right)^2$, $R(X_i)$ is the rank of $X_i$ among $X_1,\ldots, X_n$ and $\Phi^{-1}$ is the quantile function of the standard normal distribution.  The influence function of the Gaussian rank correlation is unbounded, but in finite samples it is much more robust than the Pearson correlation \citep{Boudt2012}. 
Since the Gaussian rank correlation corresponds to the Pearson correlation of transformed data, the pairwise estimation of multidimensional correlation matrices leads always to a non-negative definite estimate.

Another rank based estimator is Spearman's $\rho$, which is the sample correlation of the ranks $R(X_1)$, $\ldots$, $R(X_n)$ and $R(Y_1),\ldots,R(Y_n)$. To obtain a consistent estimator for $\rho$, one has to apply the transformation 
$\hat{\rho}_{\rm Sp.c} = 2\sin\left(\pi \hat{\rho}_{\rm Sp}/6 \right)$,	
which goes back to \citet{pearson:1907}.
Another popular nonparametric estimator is Kendall's $\tau$, which is defined as
\[
	\hat{\rho}_{\rm Ken}=\frac{2}{n(n-1)}\sum_{i > j}\bs\left((X_i-X_j)(Y_i-Y_j)\right),
\]
where $\bs(\cdot)$ is the sign function defined at the beginning of Section \ref{sec:sscm}, here applied to a univariate argument.
It also requires a consistency transformation, which is valid under ellipticity \citep[e.g.][]{Mottonen1999}: 
$\hat{\rho}_{\rm Ken.c}=\sin\left(\pi \hat{\rho}_{\rm Ken}/2\right)$. 
A highly robust, non-parametric procedure based on signs is the quadrant correlation, which can be expressed as
\[
	\hat{\rho}_{\rm Q}=\frac{1}{n}\sum_{i=1}^n\bs\left((X_i-\mbox{med}(X))(Y_i-\mbox{med}(Y))\right),
\]
where $\mbox{med}(X)$ denotes the median of $X_1,\ldots,X_n$. The same transformation $\hat{\rho}_{\rm Q.c}=\sin\left(\pi \hat{\rho}_{\rm Q}/2\right)$ renders this
estimator consistent for $\rho$ under elliptical distributions.
All three nonparametric estimators $\hat{\rho}_{\rm Sp.c}$, $\hat{\rho}_{\rm Ken.c}$, $\hat{\rho}_{\rm Q.c}$ have a bounded influence function and are therefore called B-robust. Their influence functions, asymptotic variances and gross-error sensitivities can be found in \citet{CrouxDehon2010}.

\subsection{GK estimators}
\citet{Gnanadesikan1972} introduced an estimation principle based on robust variance estimation,
\[
	\hat{\rho}=
	\frac{
		\hat\sigma^2(X/\alpha+Y/\beta)-\hat\sigma^2(X/\alpha-Y/\beta)
	}{
		\hat\sigma^2(X/\alpha+Y/\beta)+\hat\sigma^2(X/\alpha-Y/\beta)
	},
\]
where $\hat\sigma$ is can be any robust scale measure and $\alpha=\hat\sigma(X)$, $\beta=\hat\sigma(Y)$. 
Such an estimator can be seen to be Fisher-consistent for $\rho$, regardless of the choice of the scale measure $\hat\sigma$, if $X+Y$, $X-Y$ as well as $X$ and $Y$ have the same distribution up to location and scale, which is fulfilled for elliptical distributions. 
According to \citet{MaGenton2001}, the correlation estimator has the same asymptotic efficiency as the underlying variance estimator. There is also a relationship between the influence functions, which guarantees that the B-robustness translates from the variance to the correlation estimator, see \citet{GentonMa1999}. In the recent literature, there are two proposals for the variance estimation. \citet{Maronnazamar2002} favor the so-called $\tau$-scale:
\[
	\hat{\sigma}_\tau = \frac{\sigma_0^2}{n}\sum_{i=1}^n d_{c_2}\left(\frac{X_i-\hat\mu(X)}{\sigma_0}\right),
	\qquad \mbox{ where } \qquad
	\hat\mu(X)=\frac{\sum_{i=1}^n w_i X_i}{\sum_{i=1}^n w_i},
\]
$w_i = W_{c_1}\!\left\{(X_i-\med(X))/\sigma_0\right\}$, 
$\sigma_0 = \mbox{med}\left\{ \, |X_i-\med{X}| \ : \  i =1,\ldots, n \right\} $, 
$d_{c}(x)=\min(x^2,c^2)$ and 
$W_{c}(x)=\left(1-(x/c)^2\right)^2 \mathds{1}_{\{|x|\leq c\}}$.
They use $c_1=4.5$ and $c_2=3$ to get an efficiency of approximately 0.8 under normality distribution. \citet{MaGenton2001} use the $Q_n$, which is defined as
\[
	Q_n(X)=d\cdot \{|x_i-x_j|:i<j\}_{(k)},
\]
where $k={[n/2]+1 \choose 2}$ and $d$ is a consistency factor equaling $1/(\sqrt{2}\Phi^{-1}(5/8))$ for the normal distribution. This estimator has an efficiency of 0.82, see \citet{RousseeuwCroux1993}. The influence function of the resulting covariance estimator is bounded and can also be found in \citet{MaGenton2001}. 

%
%
%
%
%
%
\subsection{Affine equivariant estimators}
One can estimate the correlation by means of any affine equivariant, bivariate scatter estimator $\hat{V}_n$ using the relation 
	$\hat{\rho}=\hat{v}_{1,2}/\sqrt{\hat{v}_{1,1}\hat{v}_{2,2}}$.
\citet{Taskinen2006} derive the influence function of the correlation estimator from the influence function of $\hat{V}_n$ under elliptical distributions. Furthermore, the asymptotic variance of $\hat{\rho}$ is of the form $(1-\rho^2)^2\cdot ASV(\hat{v}_{1,2}),$ where $ASV(\hat{v}_{1,2})$ is the asymptotic variance of $\hat{v}_{1,2}$ under the corresponding spherical distribution. We consider four examples of robust affine equivariant scatter estimators.

\citet{Tyler1987} proposed an M-estimator for the shape matrix $V$, being a suitably scaled solution of
\[
	\frac{2}{n}\sum_{i=1}^n \frac{(\bX_i - \hat\bmu_n)( \bX_i - \hat\bmu_n)^T}{(\bX_i - \hat\bmu_n)^T \hat{V}^{-1} (\bX_i - \hat\bmu_n)}=\hat{V}_n,
\]
where $\hat\bmu_n$ is a suitable multivariate location estimate. In the simulations in Section \ref{sec:num.comp} we take the spatial median. 
The Tyler estimator can be regarded as an affine equivariant version of the SSCM and is also distribution-free within the elliptical model. The corresponding correlation estimate in two dimensions has an efficiency of 0.5 \citep[e.g.][]{Taskinen2006}. A highly robust, affine equivariant scatter estimator is the minimum covariance determinant estimator (MCD) proposed by \citet{Rousseeuw1985}. For a given trimming constant $\alpha$, it is defined as the sample covariance matrix of the subset of the observations that yields the smallest determinant of the estimated matrix among all subsets of size $\lfloor(1-\alpha)\cdot n\rfloor$.  
Choosing $\alpha=0.5$ results in an asymptotic breakdown point of 0.5. 
Since the asymptotic efficiency, especially in small dimensions, is rather low, the raw MCD is usually followed by a reweighting step. We will call this two-step estimated the weighted MCD. For both, the raw and the weighted MCD, influence functions, consistency factors and asymptotic efficiencies can be found in \citet{CrouxHaesbroeck1999}. 

\citet{Stahel1981} and \citet{Donoho1982} proposed another covariance estimator with an asymptotic breakdown point of 0.5. It is defined as 
\[
	\hat{V}_n =\left(\sum_{i=1}^n w_i\right)^{-1} \sum_{i=1}^n w_i (\bX_i-\hat{\bmu}_n)(\bX_i-\hat{\bmu}_n)^T
	\quad \mbox{where} \quad
	\hat{\bmu}_n= \left(\sum_{i=1}^n w_i\right)^{-1} \sum_{i=1}^nw_i \bX_i,
\]
$ w_i = \min\{ 1 , (c/r_i)^2\}$ and $c$ is often chosen as the 0.95-quantile of the $\chi^2_2$-distribution. The value 
\[
	r_i=\max_{a: |a|=1} \frac{a^T\bX_i-\mbox{med}({a^T\X_n})}{\mbox{MAD}(a^T\X_n)},
\]
is a measure of the outlyingness of $X_i$ among all observations. Any other high-breakdown point location and scale estimators can be used instead of   the median and median absolute deviation \citep[MAD,][]{Hampel1974}. The influence function, asymptotic distribution and gross error sensitivity of the Stahel-Donoho estimator can be found in \citet{Gervini2002}, but an explicit value of its asymptotic efficiency does not seem to be available in the literature.

\citet{Davies1987} proposed a multivariate generalization of S-estimators, being defined as
\[
	(\hat\bmu_n, \hat{V}_n) = \argmin_{\bmu, V} \det(\hat{V}_n) \quad \mbox{ subject to} \quad \ave_{i=1}^n w(\hat{d}_i) = b,
\]
where $\hat{d}_i = \{(\bX_i -\hat\bmu_n)^T \hat{V}_n^{-1}(\bX_i-\hat\mu_n)\}^{1/2}$ and $w$ is a suitable, smooth and bounded, weight function, usually the Tukey--biweight:
\[
	w_c(y)=\min\left(\frac{y^2}{2}-\frac{y^4}{2c^2}+\frac{y^6}{6c^4},\frac{c^2}{6} \right).
\]
Letting $b = E\{ w_c(\| V^{-1/2}(\bX-\bmu)\|) \}$ yields Fisher-consistency at the elliptical population distribution $F$, and if $c$ is chosen such that  $r c^2/ 6= E\{ w_c(\| V^{-1/2}(\bX-\bmu)\|) \}$, the S-estimator achieves an asymptotic breakdown point of $0 < r \le 1/2$. We consider the common standard choices $r = 1/2$ and consistency for $\Sigma$ at the normal model. Asymptotics can be found in \citet{Davies1987}, the influence function was calculated by \citet{Lupuhaa1989}, and efficiencies under normal distribution were calculated for instance in \citet{CrouxHaesbroeck1999}.

Table \ref{tab:1} lists the asymptotic relative efficiencies of the mentioned correlation estimators with respect to the Pearson correlation under normality. 
Specific tuning constants, parameters, weight functions, etc., are chosen as described above.
The efficiency of the nonparametric estimators is declining with $|\rho|$, but the loss is rather small for moderate values.  As we can see, the spatial correlation can well compete with highly robust estimators in terms of efficiency.
\begin{table}
\begin{center}
\begin{tabular}{l|cc||l|cc}
							&    $\rho=0$ & $\rho=0.5 $ & &$\rho=0$ 		& $\rho=0.5$    \\   \hline
Pearson				& \multicolumn{2}{c||}{1} 	&    GK-$Q_n$  	& \multicolumn{2}{c}{0.823}\\
Spatial sign  & 0.5 & 0.464 					  &      GK-$\tau$ 	&\multicolumn{2}{c}{0.8}\\
Gaussian rank &\multicolumn{2}{c||}{1}   	&    Tyler			&\multicolumn{2}{c}{0.5}\\
Spearman			& 0.912 &0.867        		&      rMCD				&\multicolumn{2}{c}{0.033}\\
Kendall				& 0.912 &0.892         		&      wMCD				&\multicolumn{2}{c}{0.401}\\
Quadrant			& 0.405 &0.342            &      S					&\multicolumn{2}{c}{0.377}
\end{tabular}
\caption{Asymptotic efficiency of correlation estimators for $p=2$ under normality}
\label{tab:1}
\end{center}
\end{table}


\section{Numerical comparison}
\label{sec:num.comp}

We compared the correlation estimators in four different situations: under normality, under ellipticity, in outlier scenarios and at non-elliptical distributions. 
We used the statistical software R, including the packages ICSNP (Tyler's M-estimator), MNM (elliptical power exponential distribution), mvtnorm (multivariate normal and elliptical $t$-distributions), pcaPP (spatial median), rrcov (Stahel--Donoho and S-estimator) and robustbase ($\tau$-scale, MCD and $Q_n$). In all scenarios, the estimators were transformed to be Fisher-consistent for the normal distribution. For some estimators, consistency-transformations for other distributions are known as well, but it is unrealistic in practice to know the kind of distribution in advance. Furthermore, the marginal variances are always chosen equal.

\subsection{Results under normality}
First we examine bias and variance under normality. We choose $\rho=0.5$, let the sample size $n$ vary from 5 to 100, and generate 100,000 samples for each sample size. In Figure \ref{fig3} (left), we see that all correlation estimators are biased towards zero in small samples. 
Next to the Pearson correlation, Kendall's $\tau$ and Spearman's $\rho$ (the adequately transformed estimates) are least biased. The bias of the raw MCD still remains heavy even for $n=100$. 
The spatial sign correlation behaves very well in terms of finite-sample variance. On the right-hand side of Figure \ref{fig3}, the variance times $n$ (the ``$n$-stabilized variance'') is plotted against $n$, which is, contrary to most other estimators, nearly a horizontal line.
%
This indicates that
asymptotic tests and confidence intervals based on the spatial sign correlation provide good approximations also in small samples.


\begin{figure}
\begin{center}
\includegraphics[width=0.85\textwidth]{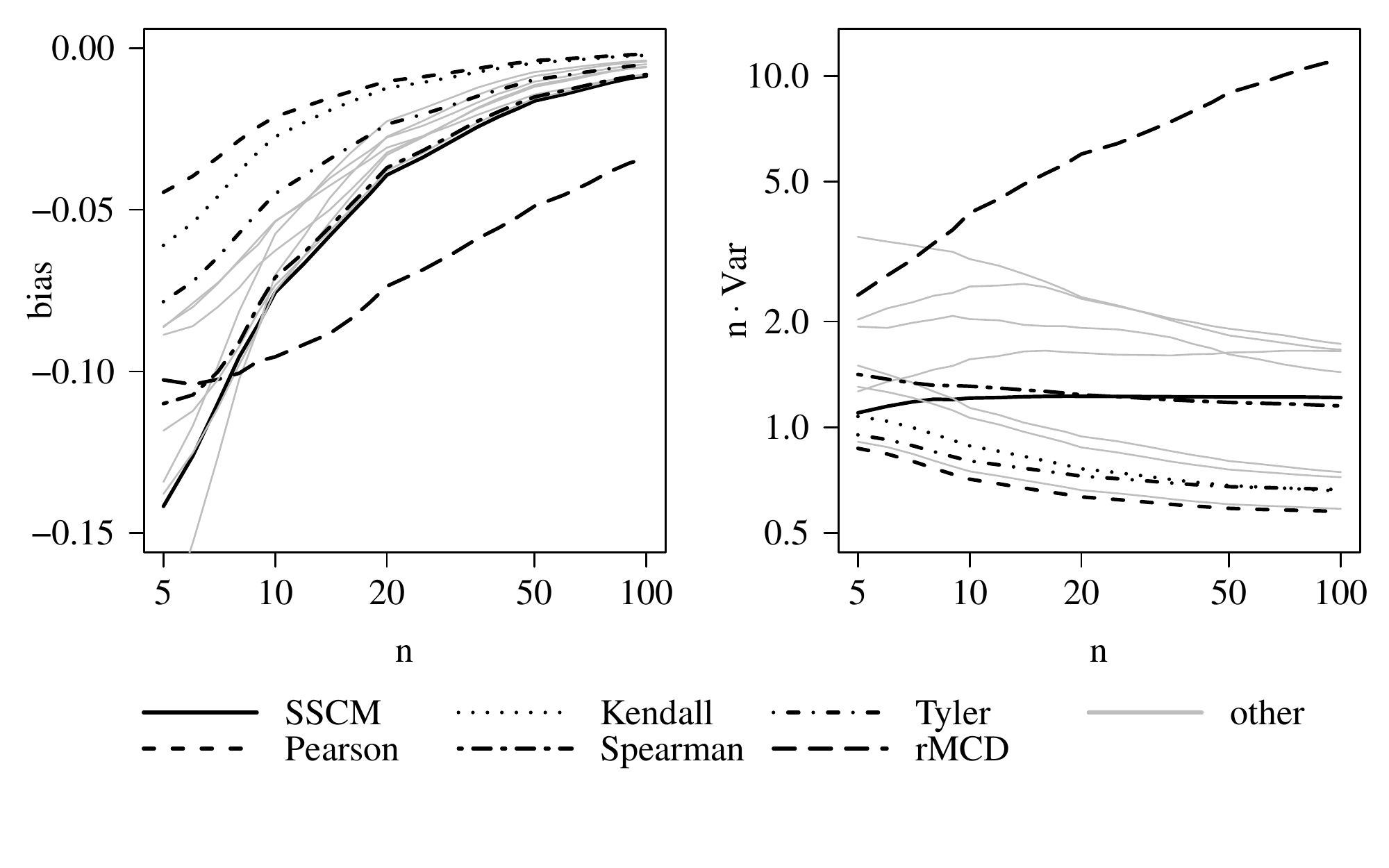}\vspace{-0.7 cm}
\end{center}
\vspace{-2.0ex}
\caption{Simulated finite sample bias (left) and $n\cdot$variance (right) under normality, $\rho=0.5$ and different sample sizes $n$.}
\label{fig3}
\end{figure}

\subsection{Results under elliptical distributions}
Furthermore, the behavior under different elliptical tails is investigated. We consider two subclasses of elliptical distributions that generate varying tails: the $t_\nu$-family 
and the power exponential family \citep[e.g.][pp.~208, 209]{bilodeau:brenner:1999}.
The results for the $t_\nu$ distribution are summarized in Table \ref{tab:2}, where the mean squared error (MSE) of the various correlation estimators based on 100,000 samples are given for $\rho=0$ and $\rho=0.5$ and different degrees of freedom $\nu$. 
Keep in mind that formally the correlation does not exist for one and two degrees of freedom, and we estimate the corresponding parameter $\rho$ of the shape matrix instead, see the remark at the beginning of Section \ref{sec:correlation}.
\begin{table}
\begin{center}
\begin{tabular}{l|cccc|cccc}
$\rho$ 				& \multicolumn{4}{c|}{0}& \multicolumn{4}{c}{0.5}\\
$\nu$						& 1& 2& 5& 10& 1& 2& 5& 10\\\hline\hline
Pearson				& 0.356& 0.112& 0.021& 0.013&  0.283&  0.077&  0.012& 0.007\\\hline
Spatial sign	& 0.020& 0.020& 0.019& 0.020&  0.012&  0.012&  0.012&  0.012\\
Quadrant			& 0.024& 0.024& 0.024& 0.017&  0.017&  0.016&  0.016&  0.016\\
Kendall 			& 0.019& 0.016& 0.013& 0.012&  0.012&  0.010&  0.008&  0.007\\
Spearman			& 0.016& 0.014& 0.012& 0.012&  0.015&  0.011&  0.008&  0.007\\
Gaussian rank	& 0.021& 0.017& 0.013& 0.012&  0.019&  0.013&  0.008&  0.007\\\hline
GK-$Q_n$ 			& 0.021& 0.017& 0.015& 0.014&  0.012&  0.010&  0.009&  0.008\\
GK-$\tau$ 		& 0.024& 0.019& 0.015& 0.014&  0.014&  0.011&  0.009&  0.008\\\hline
Tyler					& 0.020& 0.020& 0.020& 0.020&  0.012&  0.012&  0.012&  0.012\\
rMCD					& 0.076& 0.099& 0.132& 0.149&  0.047&  0.062&  0.085&  0.098\\
wMCD					& 0.037& 0.035& 0.034& 0.032&  0.022&  0.021&  0.020&  0.019\\
S							& 0.033& 0.030& 0.029& 0.029&  0.019&  0.017&  0.017&  0.017\\
Stahel-Donoho	& 0.031& 0.028& 0.026& 0.025&  0.018&  0.016&  0.015&  0.015
\end{tabular}
\caption{MSE under $t_\nu$ distributions with different degrees of freedom and $n=100$.}
\end{center}
\label{tab:2}
\end{table}
The MSE of the spatial sign correlation remains constant with respect to $\nu$, which applies only to the quadrant correlation and Tyler's M-estimator among the other methods. For one degree of freedom and $\rho=0.5$, the spatial sign correlation, together with Kendall's $\tau$ and Tyler's M-estimator, is most efficient. For one degree of freedom and $\rho=0$, Spearman's $\rho$ yields the smallest MSE by far. But this is due to its (asymptotic) bias towards zero. Contrary to Kendall's $\tau$, the consistency transformation applied to Spearman's $\rho$ under normality is not valid under ellipticity in general. 

The MSEs, again based on 100,000 repetitions, for the power exponential distribution are displayed in Figure \ref{fig4}. The sample size is $n = 100$, the true correlation $\rho = 0.5$, and the power parameter $\alpha$ ranges from 0.02 to 2 in 56 (non-equidistant) steps. 
Letting $\alpha = 1$ corresponds to the normal distribution and $\alpha=0.5$ yields the elliptical Laplace distribution. With decreasing $\alpha$, the distribution gets heavier tailed and more peaked in the origin, but all moments exist for any $\alpha > 0$ and the density remains bounded. 
As before, the MSE of the spatial correlation does not depend on the ``tailedness parameter'' $\alpha$, which is in line with the asymptotic result. Only for very small $\alpha$, we observe a slight incline. 
The power exponential distribution with a small $\alpha$ is particularly challenging for robust scatter estimators, since it possesses heavy tails and a probability mass concentration at the origin. Robust estimators downweight or reject outlying observations, which are in this case no contaminations, but carry --- in contrast to the bulk of the data in the center --- the main information about the shape. In fact, the MSE of the raw MCD is above the displayed region in Figure \ref{fig4}. The spatial sign covariance matrix can cope well with such peaked distributions. For $\alpha < 0.1$, we find it, together with Tyler's estimator, to have the smallest MSE. However, it is crucial to use an appropriate location estimator that also works well with peaks at the center, see e.g. the discussion in Section 3.2 of \citet{Duerre2014}. Altogether Kendall's $\tau$ appears to perform best over the whole range of $\alpha$.
\begin{figure}
\begin{center}
\includegraphics[width=0.75\textwidth]{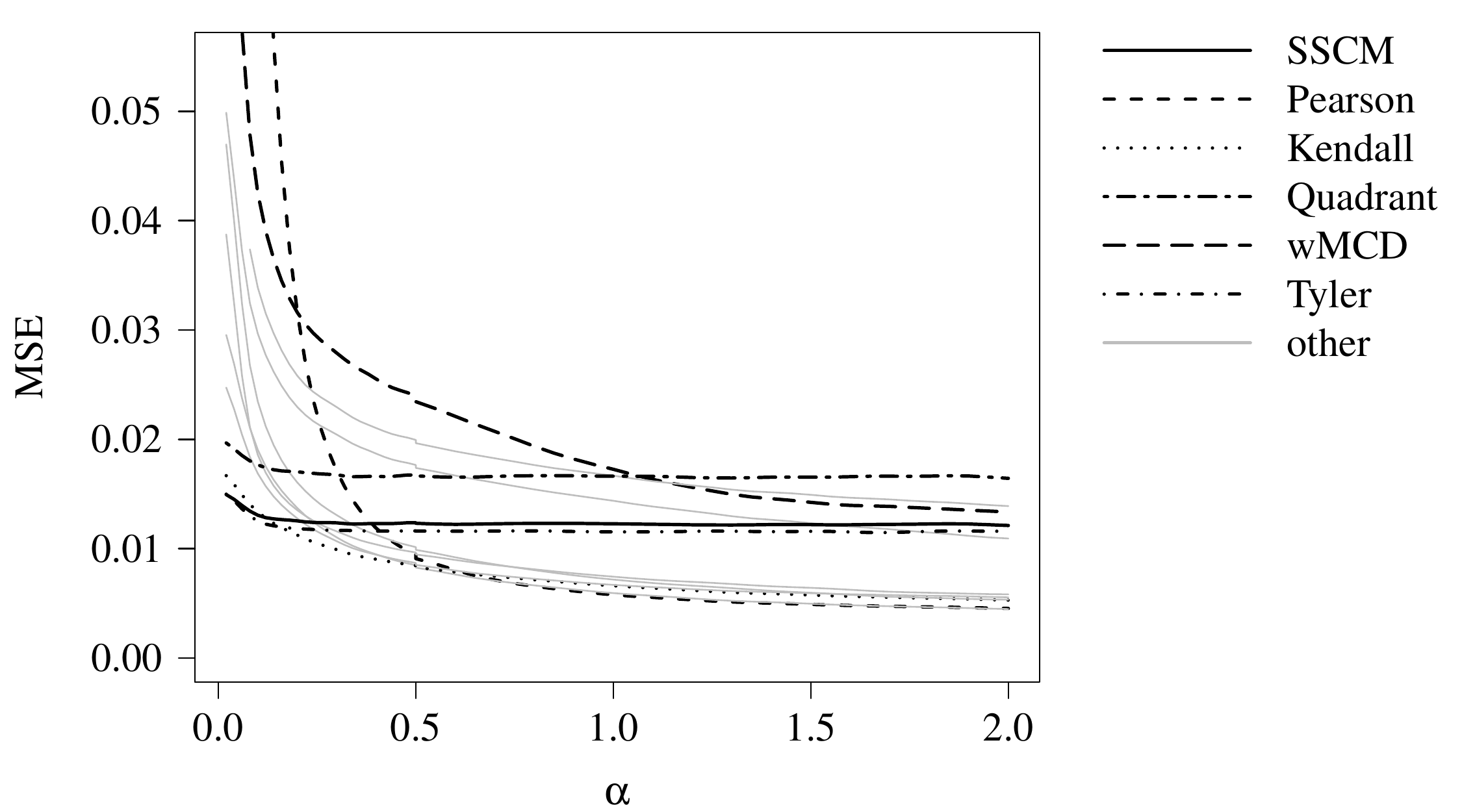}
\end{center}
\vspace{-2.0ex}
\caption{MSE of correlation estimators under the power exponential distribution with different $\alpha$, $\rho=0.5$ and $n=100$.}
\label{fig4}
\end{figure}
\subsection{Results under contamination}
To assess the robustness properties, we consider two scenarios: a single outlier of varying size, and an increasing amount of outliers stemming from a contamination distribution. 
In the first situation, we start from a bivariate normal sample with $\rho=0.5$ and $n=100$ and shift the first observation to the right by a distance $h$ ranging from 0 to 5. This yields a high leverage point, suggesting a smaller correlation. We measure the influence of this one outlier in the $x$-direction by the difference of the estimate before and after the manipulation. The result is a sensitivity curve along the $x$-direction (divided by the factor $n =100$), plotted in Figure \ref{fig5}. The influence of the additive outlier is very small for the spatial sign correlation and also for most other robust estimators. An exception is the Gaussian rank correlation, which is known to have an unbounded influence function. Several highly robust estimators (in particular, the S-estimator and the MCD) 
completely disregard outliers that are sufficiently far away from the bulk of the data, and their sensitivity curves tend back to 0 as $h$ further increases. 
\begin{figure}
\begin{center}
\includegraphics[width=0.75\textwidth]{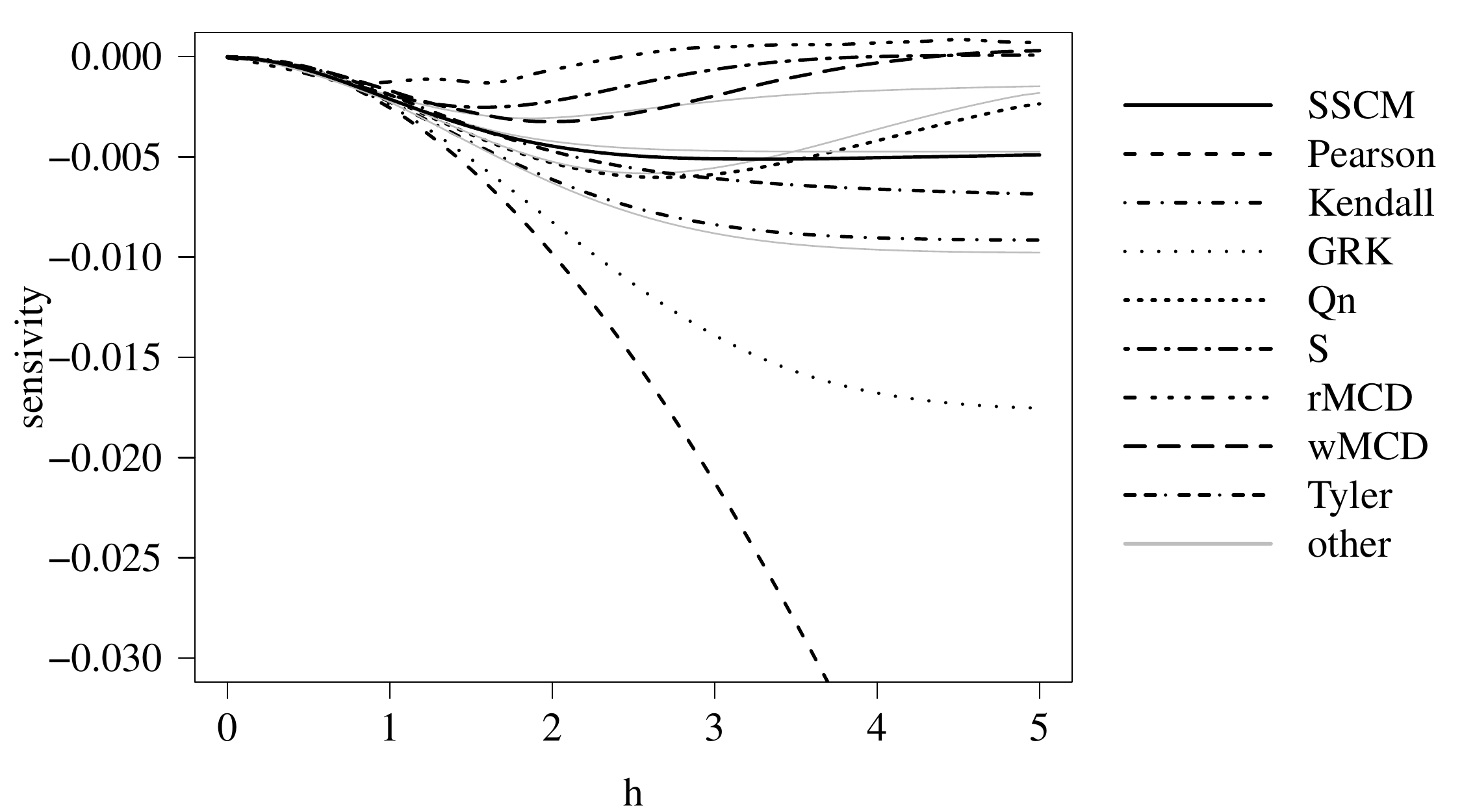}
\end{center}
\vspace{-2.0ex}
\caption{Bias of correlation estimators under normality with $\rho=0.5$, $n=100$ and one additive outlier of size $h$ in the $x$-direction.}
\label{fig5}
\end{figure}

In the second setting, we start as usual with normally distributed data, $\rho=0.5$, marginal variances $1$ and $n=100$. 
Then we replace, one after another, the ``good'' observations by outliers, which stem from a normal distribution with marginal variances 4 and correlation $\rho = -0.5$. In Figure \ref{fig6}, the bias of the estimators (average of 50,000 repetitions) is plotted against the contamination fraction. 
Here the picture is somewhat reversed to the efficiency results under normality: the rather efficient rank-based estimators like Spearman's $\rho$ and Kendall's $\tau$ are substantially biased, and the rather inefficient and highly robust estimators (MCD, S, Stahel-Donoho) perform better. As before, the spatial sign correlation takes a place in the middle.
\begin{figure}
\begin{center}
\includegraphics[width=0.75\textwidth]{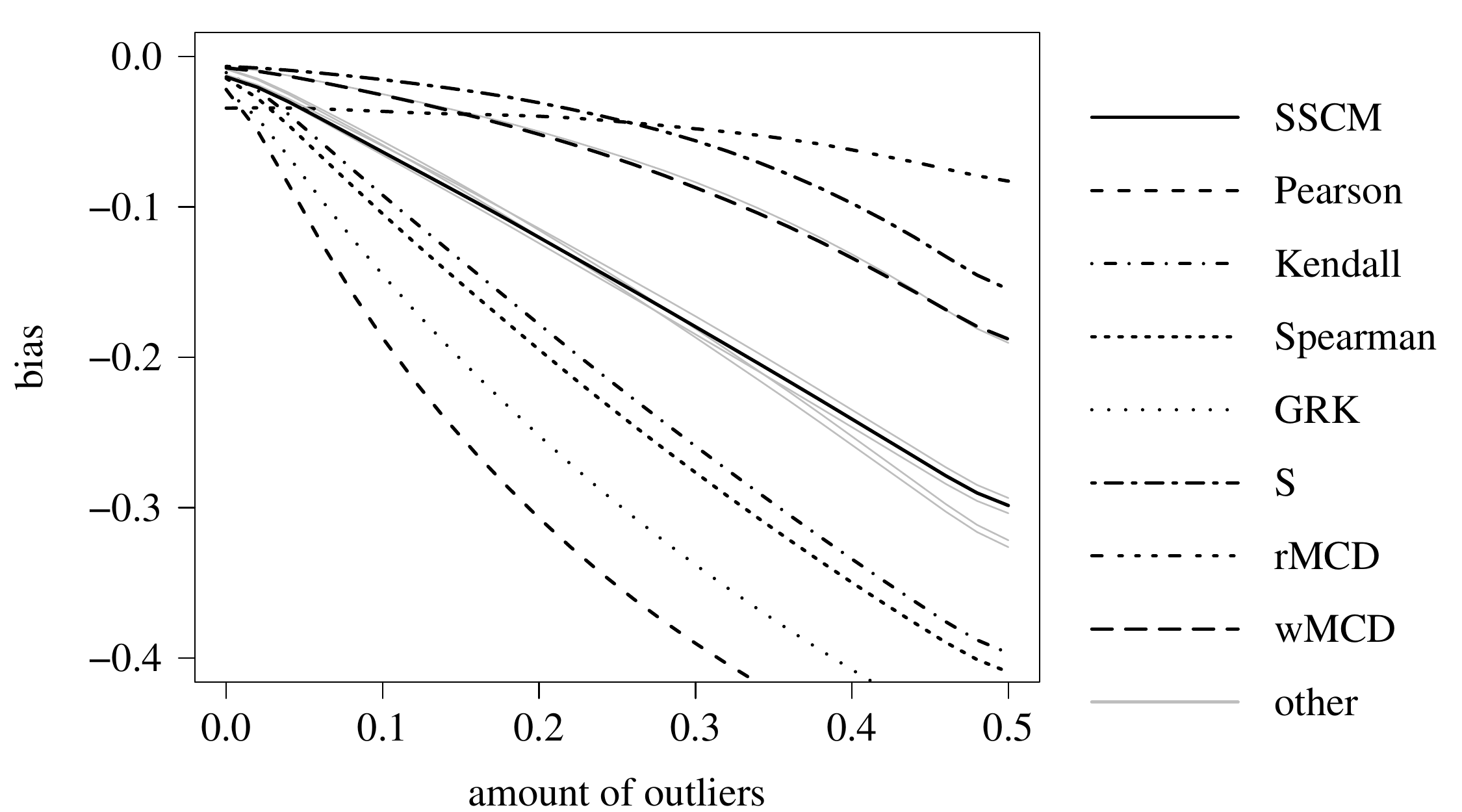}
\end{center}
\vspace{-2.0ex}
\caption{Bias of correlation estimators under normality with $\rho=0.5$, $n=100$ and with a different amount of outliers with correlation $\rho=-0.5$.}
\label{fig6}
\end{figure}

\subsection{Results under non-ellipticity}

The robust correlation estimators are designed to estimate Pearson's moment correlation at the normal model, and the questions remains, what happens in data models that exhibit none of the basic geometric characteristics of the normal distribution, such as symmetry or unimodality. Is the estimate at least somewhere near the actual moment correlation of the population distribution?  An in-depth answer, alone for spatial sign correlation, is beyond the scope of the paper, but we want to get a rough impression in a simulated example. We consider a unimodal, but heavily skewed distribution.
Let $X = \alpha Z_1 + Z_2$ and $ Y = Z_1+\alpha Z_2$, where $\alpha$ is a scalar parameter and $Z_1$ and $Z_2$ are two independent, exponentially  distributed random variables (with parameter $\lambda = 1$). By letting $\alpha$ vary between 0 and 1, one can generate any (positive) correlation $\rho$ between $X$ and $Y$. The explicit formula is
\[ 
	\alpha = (1-\sqrt{1-\rho^2})/\rho
\] 
In Figure \ref{fig7}, the bias of the estimators (based 50000 repetitions) is plotted against the correlation $\rho$. The sample size is $n = 100$.
It is not surprising that most estimators, particularly the nonparametric ones including the spatial sign correlation, are substantially biased. Besides the sample correlation, we find the Gnanadesikan-Kettenring-estimator based on the $Q_n$ (but not on the $\tau$-scale) and the $S$-estimator to be nearly unbiased. 
\begin{figure}
\begin{center}
\includegraphics[width=0.75\textwidth]{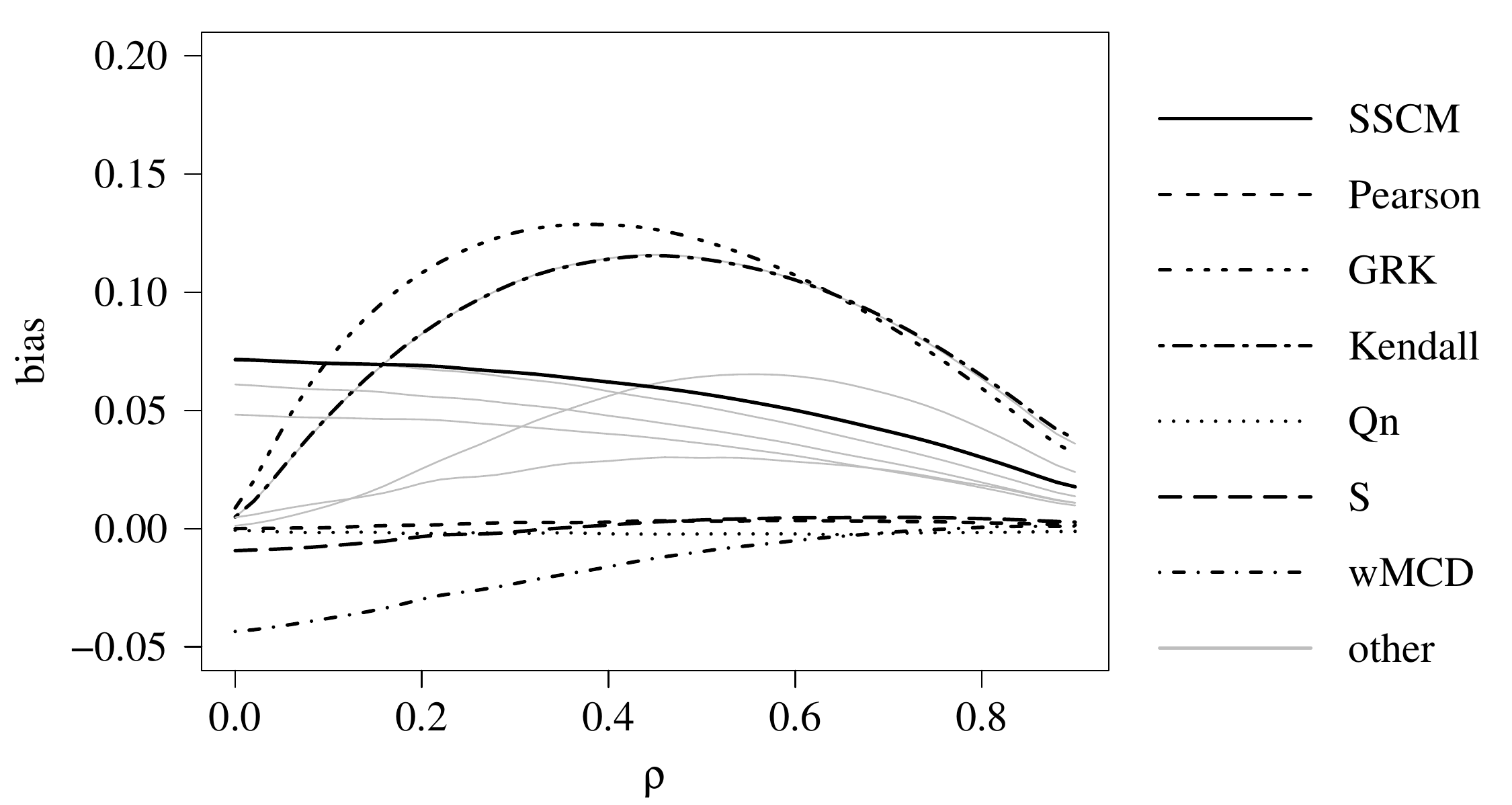}
\end{center}
\vspace{-2.0ex}
\caption{Bias of correlation estimators at a bivariate exponential distribution for $n=100$ and different $\rho$.}
\label{fig7}
\end{figure}

\appendix

\section{Proofs}

\begin{proof}[Proof of Proposition \ref{prop:S}] 
The proofs of parts (\ref{S 1}) and (\ref{S 2}) are fairly straightforward employing the definitions of $\bmu(\bX)$ and $S(\bX)$. The key is the orthogonal equivariance of the spatial median and the orthogonal invariance of the spatial sign. A proof of a more general version of part (\ref{S 2}) can also be found in \citet{Visuri2001}. It remains to show part (\ref{S 3}). We only consider the non-trivial case $\lambda_1 \neq \lambda_2$. Since $\bX \sim F \in \Ee_p(\bmu,V)$, there exists a spherical random variable $\bY$ such that $\bX = U\Lambda^\frac{1}{2}\bY + \bmu$, with $U$ and $\Lambda$ as in (\ref{evd}).
%
%
%
We thus have
\[
S(\bX) 
= \mathbb{E}\left\{\frac{(\bX-\bmu)(\bX-\bmu)^T}{(\bX-\bmu)^T(\bX-\bmu)}\right\}
= U\, \mathbb{E} \left\{\frac{\Lambda^{1/2} \bY\bY^T \Lambda^{1/2}}{ \bY^T\Lambda\bY}\right\}U^T.
\]
It remains to evaluate the diagonal elements $\delta_1$ and $\delta_2$ of the expectation on the right-hand side. (Since spherical distributions have symmetrically distributed margins, the off-diagonal elements are zero.) The spatial sign is distribution-free within the elliptical model, i.e.\ $\bs(\bX) = \bs(\tilde\bX)$ in distribution for any two elliptical random vectors $\bX$ and $\tilde\bX$ with the same shape matrix $V$.  The distribution of $\bs(\bX)$ for elliptical $\bX$ is also known as the \emph{angular central Gaussian distribution}, cf.\ \citet{Tyler1987b}. Hence we can choose any spherical distribution for $\bY$, e.g.\ the uniform distribution on the unit circle with density
\begin{align} \label{Formel2}
	f(\by)=\frac{1}{\pi}\mathds{1}_{[0,1]}(\by^T \by),
\end{align}
which yields with $\by = (y_1, y_2)$
\begin{align*}
\delta_1&=\frac{1}{\pi}\int_0^1\int_{-\sqrt{1-y_1^2}}^{\sqrt{1-y_1^2}}\frac{\lambda_1y_1^2}{\lambda_1y_1^2+\lambda_2y_2^2}dy_2dy_1.
\end{align*}
Substituting spherical coordinates $y_1 = r \cos(\alpha)$, $y_2 = r \sin(\alpha)$, we obtain 
\begin{align*}
\delta_1=\frac{1}{\pi}\int_0^1\int_0^{2\pi}\frac{\lambda_1r^3\cos(\alpha)^2}{\lambda_1r^2\cos(\alpha)^2+\lambda_2r^2\sin(\alpha)^2}d\alpha dr.
\end{align*}
Using the identities $\cos(\alpha)=(e^{i\alpha}+e^{-i\alpha})/2$ and $\sin(\alpha)= (e^{i\alpha}-e^{i\alpha})/(2i)$, we further substitute $z = e^{i\alpha}$ and get
\begin{align}
\delta_1= \frac{1}{\pi}\int_0^1 r  \oint_{\Gamma} \frac{\lambda_1(z^2+1)^2}{iz((\lambda_1-\lambda_2)z^4+2(\lambda_1+\lambda_2)z^2+(\lambda_1-\lambda_2))}dz \, dr,
\label{Formel1}
\end{align} 
where $\Gamma$ denotes the unit circle on the complex plane, and $\oint_\Gamma$ the (closed curve) line integral along $\Gamma$. We apply the residue theorem to solve the inner line integral \citep[e.g.][p.~149]{Ahlfors1966}.
The integrand is meromorphic and has no pole on the boundary of the unit circle. The residue theorem thus yields
\begin{align*}
\delta_1= \frac{1}{\pi}\int_0^1 2\pi i r\sum_{a\in P}\mbox{Res}(\phi,a)dr,
\end{align*}
where $\phi$ is the integrand of equation (\ref{Formel1}), $P$ its set of poles within the unit circle, and $\mbox{Res}(\phi,a)$ the residue of $\phi$ in $a$. The integrand $\phi$ has three poles inside the unit circle
\be \label{eq:poles} 
z_1=0, \quad z_{2/3}=\pm \frac{\sqrt{\lambda_1}-\sqrt{\lambda_2}}{\sqrt{\lambda_1-\lambda_2}}
\ee
with residues
\begin{align*}
	\mbox{Res}(\phi,0)=\frac{-i\lambda_1}{(\lambda_1-\lambda_2)}, \quad 
	\mbox{Res}(\phi,z_{2/3})=\frac{i\sqrt{\lambda_1\lambda_2}}{2(\lambda_1-\lambda_2)}.
\end{align*}
Hence we obtain
\begin{align*}
\delta_1=\frac{1}{\pi}\int_0^1 2\pi r \frac{\sqrt{\lambda_1}}{\sqrt{\lambda_1}+\sqrt{\lambda_2}}dr = \frac{\sqrt{\lambda_1}}{\sqrt{\lambda_1}+\sqrt{\lambda_2}}.
\end{align*}
The expression for $\delta_2$ is obtained by exchanging $\lambda_1$ and $\lambda_2$. The proof is complete.
\end{proof}

\begin{proof}[Proof of Proposition \ref{prop:S.hat}]
Parts (\ref{S.hat 1}) and (\ref{S.hat 2}) are proved in \citet{Duerre2014}. In particular, Theorem 3 of \citet{Duerre2014} identifies $W_S$, under the conditions of part (\ref{S.hat 2}), as the asymptotic covariance matrix of the SSCM with known location, i.e., 
\be \label{eq:W_S}
	W_S = \Cov\left( \vec \left\{ \bs(\bX-\bmu) \bs(\bX-\bmu)^T \right\} \right).
\ee
For proving part (\ref{S.hat 3}) it remains to evaluate (\ref{eq:W_S}) under the additional assumption $\bX \sim F \in \Ee_2(\bmu,V)$. As in the proof of Proposition \ref{prop:S}, we use the representation $\bX = U \Lambda^{1/2} \bY + \bmu$, where $\bY = (Y_1,Y_2)^T$ is a spherical random vector. Again, we only consider the non-trivial case $\lambda_1\neq \lambda_2$ and obtain
\be \label{eq:W_S2}
	W_S = (U\otimes U) \,\Cov \left\{ \vec\left( \frac{\Lambda^{1/2} \bY\bY^T \Lambda^{1/2}}{\bY^T\Lambda\bY}\right)\right\} (U\otimes U)^T
\ee
The inner matrix on the right-hand side is
{\small
\[
	\mathbb{E}\left\{
		\frac{1}{(\bY^T\Lambda\bY)^2}
		\begin{pmatrix}
			\lambda_1^2Y_1^4&0&0&\lambda_1Y_1^2\lambda_2Y_2^2\nonumber\\
			0&\lambda_1Y_1^2\lambda_2Y_2^2&\lambda_1Y_1^2\lambda_2Y_2^2&0\\
			0&\lambda_1Y_1^2\lambda_2Y_2^2&\lambda_1Y_1^2\lambda_2Y_2^2&0\\
			\lambda_1Y_1^2\lambda_2Y_2^2&0&0&\lambda_2^2Y_2^4
		\end{pmatrix}
		\right\}
\]\be \label{eq:matrix}
	\qquad \qquad -
	\begin{pmatrix}
		\delta_1^2&0&0&\delta_1\delta_2\\
		0&0&0&0\\
		0&0&0&0\\
		\delta_1\delta_2&0&0&\delta_2^2
	\end{pmatrix}
\ee}
It remains to solve the three integrals
{\small
\[
I_1 = \mathbb{E}\left\{\! \frac{\lambda_1^2 Y_1^4}{(\lambda_1Y_1^2 \!+\!\lambda_2Y_2^2)^2}\!\right\},\,
I_2 = \mathbb{E}\left\{\! \frac{\lambda_2^2 Y_2^4}{(\lambda_1Y_1^2 \!+\!\lambda_2Y_2^2)^2}\!\right\}, \, 
I_3 = \mathbb{E}\left\{\! \frac{\lambda_1 Y_1^2\lambda_2Y_2^2}{(\lambda_1Y_1^2\!+\!\lambda_2Y_2^2)^2}\!\right\},
\]}

\vspace{-2.0ex}
\noindent
where $I_1$ and $I_2$ are of the same type: $I_2$ is obtained from $I_1$ by simply exchanging $\lambda_1$ and $\lambda_2$. We start with $I_1$. By a fully analogous chain of arguments and manipulations as in the proof of Proposition \ref{prop:S}, we arrive at
\[
	I_1 = 
   \frac{1}{\pi}\int_0^1 r \oint_\Gamma
   \frac{\lambda_1^2(z^2+1)^4}{iz((\lambda_1-\lambda_2)z^4+2(\lambda_1+\lambda_2)z^2+(\lambda_1-\lambda_2))^2}
   dz \, dr.
\]
and apply again the residue theorem to solve the inner line integral.
We call the integrand $\phi_1$ and observe that it has the same singularities as the integrand $\phi$ in the proof of Proposition \ref{prop:S}, cf.~(\ref{eq:poles}). However, the poles $z_2$ and $z_3$ are of order two here, resulting in the residues
\begin{align*}
	\mbox{Res}(\phi_1,0) = \frac{-i\lambda_1^2}{(\lambda_1-\lambda_2)^2}, \qquad 
	\mbox{Res}(\phi_2,z_{2/3}) = \frac{i\sqrt{\lambda_1\lambda_2}(3\lambda_1-\lambda_2)}{4(\lambda_1-\lambda_2)^2}.
\end{align*}
Hence we obtain
\[
	I_1 = 
	\frac{1}{\pi}\int_0^1 2\pi r \frac{\lambda_1^2-\frac{1}{2}\sqrt{\lambda_1\lambda_2}(3\lambda_1-\lambda_2)}{(\lambda_1-\lambda_2)^2}dr
	\ = \
	\frac{\lambda_1^2-\frac{1}{2}\sqrt{\lambda_1\lambda_2}(3\lambda_1-\lambda_2)}{(\lambda_1-\lambda_2)^2}.
\]
It remains to solve $I_3$, which we transform, again, by the same chain of arguments as in the proof of Proposition \ref{prop:S}, to
\begin{align*}
I_3 \, = \, 
	- \frac{1}{\pi}
	\int_0^1 r \oint_\Gamma
		\frac{\lambda_1\lambda_2(z^4-1)^2}{iz((\lambda_1-\lambda_2)z^4+2(\lambda_1+\lambda_2)z^2+(\lambda_1-\lambda_2))^2}
	dz \, dr.
\end{align*}
We call the integrand $\phi_3$. Its poles are also given by (\ref{eq:poles}) with $z_2$ and $z_3$ being of order two, resulting in the residues
\begin{align*}
\mbox{Res}(\phi_3,0) = \frac{ - i \lambda_1\lambda_2}{(\lambda_1-\lambda_2)^2}, \qquad
\mbox{Res}(\phi_3,z_{2/3}) = \frac{i \sqrt{\lambda_1\lambda_2}(\lambda_1+\lambda_2)}{4(\lambda_1-\lambda_2)^2}.
\end{align*}
We finally arrive at
\begin{align*}
I_3 =
-\frac{1}{\pi}\int_0^1 2\pi r \frac{\lambda_1\lambda_2-\frac{1}{2}\sqrt{\lambda_1\lambda_2}(\lambda_1+\lambda_2)}{(\lambda_1-\lambda_2)^2}
=
	\frac{-\lambda_1\lambda_2+\frac{1}{2}\sqrt{\lambda_1\lambda_2}(\lambda_1+\lambda_2)}{(\lambda_1-\lambda_2)^2}.
\end{align*}
Plugging the obtained expressions for $I_1$, $I_2$ and $I_3$ into the matrix (\ref{eq:matrix}) and observing (\ref{eq:W_S2}) yields the expression for $W_S$ given in Proposition \ref{prop:S.hat} (\ref{S.hat 3}). The proof is complete.
\end{proof}

Towards the proof of Propostion \ref{prop:Skor}, we consider as an intermediate step  the SSCM-based shape estimator $\hat{V}_n$ defined at the beginning of Section \ref{sec:correlation}. Precisely, we give the asymptotic distribution of the estimator 
\[
	\hat{V}_{0,n} = 	
	\begin{pmatrix}
		\hat{v}_{0,11} & \hat{v}_{0,12} \\
		\hat{v}_{0,12} & \hat{v}_{0,22} \\
		\end{pmatrix}
	= 
	\frac{1}{\sqrt{ \hat{v}_{11} \hat{v}_{22} }} \hat{V}_n. 	
\]
We have remarked at the end of Section \ref{sec:intro} that, for analyzing the scale-invariant properties of the shape of an elliptical distribution, fixing the overall scale of the shape matrix $V$ is not necessary, and we view the shape as an equivalence class of positive definite matrices being proportional to each other. 
For explicit computations, however, it is at some point necessary to fix the scale, that is, picking one specific representative from the equivalence class. 
Various ways of standardizing the shape 
can be found in the literature. \citet{Paindaveine2008} argues to choose $\det(V) = 1$, which corresponds to our choice of $\hat{V}_n$ in Section \ref{sec:correlation}. However, for our purposes, it is most convenient to standardize $V$ such that the product of its diagonal elements is 1, which corresponds to $\hVnn$ described above. Accordingly, we denote by $V_0$ the representative of the equivalence class with this property (reciprocal diagonal elements) and parametrize it as
\be
	V_0 =
	\begin{pmatrix}
		a & \rho \\
		\rho & a^{-1} \\
		\end{pmatrix}, 		\label{eq:V_0}
\ee
where the parameters $a$ and $\rho$ have the same meaning as in Section \ref{sec:correlation}, that is, the ratio of the diagonal elements of $V$ and the correlation, respectively. Lengthy but straightforward calculus yields
\be
	\hat{v}_{0,12} = \frac{c\hs_{12} b}{\sqrt{(\hs_{12}^2+b^2)^2+(\hs_{12}cb)^2}}, \label{Formel4}
\ee\be
	\hat{v}_{0,11} = \frac{2\bs(\hs_{12})\sqrt{-\rho(4\rho\hs_{12}^2+4\sqrt{1-\ts_{12}^2}\hs_{12}-\ts_{12})}+2\ts_{1,2}-4\sqrt{1-\ts_{12}^2}\hs_{12}}{4\hs_{12 b}} \label{Formel5}
\ee
where, as before, $\hat{s}_{ij}$, $i,j = 1,2$, denote the elements of the SSCM $\hat{S}(\X_n;\bmu_n)$, and $b$, $c$ and $d$ are defined in (\ref{eq:cdb}).
The following proposition summarizes the asymptotic behavior of the estimator $\hVnn$.
\begin{proposition} \label{prop:V_0}
Under the assumptions of Proposition \ref{prop:Skor}, we have for $n \to \infty$ that
\begin{enumerate}[(1)]
\item \label{prop:V_0 1}
	$\hVnn \asc  V_\rho$  and
\item \label{prop:V_0 2}
	$\sqrt{n} \left\{(\hat{v}_{0,11},\hat{v}_{0,12})^T - (a,\rho)^T \right\} \cid N_{2}\left(\bNull, W_{V_0}\right)$, where $W_{V_0} = G W_S G^T$ \\[0.5ex] with
\begin{align*}
	G & =\frac{\left((a^2+1)\,\sqrt{1-\rho^2}+2a(1- \rho^2)\right)}{\sqrt{1-\rho^2}\left(4a^2\rho^2+(a^2-1)^2\right)}
	\begin{pmatrix}
		g_{1,1} & g_{1,2} & 0 & 0 \\
		g_{2,1} & g_{2,2} & 0 & 0 
	\end{pmatrix}
\end{align*}
and
\begin{align*}
	g_{1,1}&= (a^2-1)^2\,\sqrt{1-\rho^2}+2\,a\,  (a^2+1)\,\rho^2, \\
	g_{1,2}&= (a-1)(a+1)\,\rho\,\left\{2\,a\,\sqrt{1-\rho^2}-a^2-1\right\},\\
	g_{2,1}&= \frac{1}{a}\left\{(a^2+1)\,\sqrt{1-\rho^2}-2a(1- \rho^2)\right\},\\
	g_{2,2}&= 2(a^2+1)\rho^2\sqrt{1-\rho^2} \, + \, a^{-1} (a^2-1)^2(1-\rho^2).
\end{align*} 
\end{enumerate}
\end{proposition}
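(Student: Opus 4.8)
The plan is to obtain Proposition~\ref{prop:V_0} as a delta-method consequence of Proposition~\ref{prop:S.hat}. The estimator $\hVnn$ is, by construction, a fixed smooth function of the entries of the SSCM $\hat{S}_n = \hat{S}_n(\X_n; \hat\bmu_n)$: the explicit formulas~(\ref{Formel4}) and~(\ref{Formel5}) express $\hat{v}_{0,12}$ and $\hat{v}_{0,11}$ in terms of $\hat{s}_{11}$, $\hat{s}_{12}$ (recall $\hat{s}_{22} = 1 - \hat{s}_{11}$ since the SSCM has trace $1$, and $\hat{s}_{21} = \hat{s}_{12}$ by symmetry), together with the auxiliary quantities $b$, $c$, $d$ from~(\ref{eq:cdb}), which are themselves functions of $\hat{s}_{11}$ and $\hat{s}_{12}$. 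So the whole map $\hat{S}_n \mapsto (\hat{v}_{0,11}, \hat{v}_{0,12})^T$ is a deterministic transformation $h: \R^{2\times 2} \to \R^2$, and I would define $g = h \circ \operatorname{vec}^{-1}$ as the corresponding function of $\operatorname{vec}(\hat{S}_n) \in \R^4$.

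For part~(\ref{prop:V_0 1}), consistency follows immediately: Proposition~\ref{prop:S.hat}(\ref{S.hat 1}) gives $\hat{S}_n \asc S(F)$, and $h$ is continuous at $S(F)$ (the denominators in~(\ref{Formel4}),~(\ref{Formel5}) are nonzero at the population SSCM as long as we are in the nondegenerate case), so the continuous mapping theorem yields $\hVnn \asc h(S(F)) = V_\rho$. Here one should verify, using Proposition~\ref{prop:S}(\ref{S 3}), that evaluating $h$ at the population SSCM $S(F) = U\Delta U^T$ indeed returns the standardized shape representative $V_\rho$ parametrized as in~(\ref{eq:V_0}); this is a consistency check on the algebraic formulas rather than a genuine difficulty.

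For part~(\ref{prop:V_0 2}), the asymptotic normality is the delta method applied to Proposition~\ref{prop:S.hat}(\ref{S.hat 2})--(\ref{S.hat 3}). Under the hypotheses of Proposition~\ref{prop:Skor} (bounded density at $\bmu$ in $\Ee_2(\bmu,V)$), those regularity conditions are met and we have $\sqrt{n}\,\operatorname{vec}\{\hat{S}_n - S(F)\} \cid N_4(\bNull, W_S)$ with $W_S$ given explicitly in part~(\ref{S.hat 3}). Writing $G$ for the Jacobian $\partial g / \partial\, \operatorname{vec}(S)$ evaluated at $S(F)$, the delta method gives $\sqrt{n}\{(\hat{v}_{0,11},\hat{v}_{0,12})^T - (a,\rho)^T\} \cid N_2(\bNull,\, G W_S G^T)$, which is the claimed form with $W_{V_0} = G W_S G^T$. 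The substance of the proof is therefore the computation of this Jacobian. I would differentiate~(\ref{Formel4}) and~(\ref{Formel5}) with respect to $\hat{s}_{11}$ and $\hat{s}_{12}$ (through $b$, $c$, $d$ via the chain rule), noting that the last two columns of $G$ vanish because $\hat{s}_{12}=\hat{s}_{21}$ and $\hat{s}_{22}=1-\hat{s}_{11}$ mean that $g$ depends on $\operatorname{vec}(S)$ only through its first two coordinates at the relevant point; then I would evaluate these partial derivatives at the population SSCM and re-express them in terms of $a$ and $\rho$ using Proposition~\ref{prop:S}(\ref{S 3}).

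The main obstacle is precisely this last computation: the expressions~(\ref{Formel4}) and~(\ref{Formel5}) are rather forbidding, with nested radicals, so differentiating them and then simplifying the results into the compact entries $g_{1,1},\dots,g_{2,2}$ is lengthy and error-prone bookkeeping. The cleanest route is to carry out the differentiation at the level of the eigenvalues --- using the relation $\hat\lambda_j = \hat\delta_j/\hat\delta_{3-j}$ together with the closed-form eigendecomposition of a $2\times 2$ symmetric matrix --- and only at the end translate back to $(a,\rho)$ via $\delta_j = \sqrt{\lambda_j}/(\sqrt{\lambda_1}+\sqrt{\lambda_2})$ from Proposition~\ref{prop:S}(\ref{S 3}); this keeps the algebra organized and is best delegated to a symbolic computation. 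No new probabilistic input is needed beyond Proposition~\ref{prop:S.hat}; everything reduces to the smoothness of $h$ and an explicit (if tedious) Jacobian evaluation.
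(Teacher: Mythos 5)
Your probabilistic skeleton coincides with the paper's: part (1) is the continuous mapping theorem applied to Proposition \ref{prop:S.hat}(\ref{S.hat 1}), and part (2) is the delta method applied to Proposition \ref{prop:S.hat}(\ref{S.hat 2})--(\ref{S.hat 3}), with $G$ the Jacobian of the map $H:(s_{11},s_{12})\mapsto(v_{0,11},v_{0,12})$ and the last two columns of $G$ vanishing for exactly the reason you give. The genuine difference lies in how the Jacobian is obtained, which is the entire computational substance here. You propose to differentiate the forward formulas (\ref{Formel4}) and (\ref{Formel5}) directly (possibly reorganized through the eigenvalues), and you correctly identify this as the main obstacle. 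The paper sidesteps it: it observes that the \emph{inverse} map $J:(a,\rho)\mapsto(s_{11},s_{12})$ has a comparatively tractable closed form --- obtained from the explicit eigendecomposition of the standardized shape matrix $V_0$ in (\ref{eq:V_0}) combined with $\delta_j=\sqrt{\lambda_j}/(\sqrt{\lambda_1}+\sqrt{\lambda_2})$ from Proposition \ref{prop:S}(\ref{S 3}) --- computes $\dsD J(a,\rho)$ and its determinant, and then gets $\dsD H=(\dsD J)^{-1}$ from the inverse function theorem (after checking that $H$ is a continuously differentiable bijection between the stated domain and image). This buys a $2\times2$ matrix inversion in place of differentiating nested radicals, and it keeps every intermediate expression in the target variables $(a,\rho)$ rather than in $(s_{11},s_{12})$, so no back-substitution via Proposition \ref{prop:S}(\ref{S 3}) is needed at the end. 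Your route would reach the same $G$ in principle, but the inverse-function-theorem detour is what makes the computation actually manageable; if you carry out your plan, that is the simplification you should adopt.
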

\begin{proof}[Proof of Proposition \ref{prop:V_0}.]
Part (1) is a consequence of the continuous mapping theorem, part (2) follows with the delta method. 
Note that $V_0$ is specified by the two elements $\hat{v}_{0,11}$ and $\hat{v}_{0,12}$, and, likewise, $\hat{S}_n = \hat{S}_n(\X_n;\bmu_n)$ by the two elements $\hat{s}_{11}$ and $\hat{s}_{12}$. Let $H$ be the function that maps $(\hat{s}_{11}, \hat{s}_{12})$ to $(\hat{v}_{0,11}, \hat{v}_{0,12})$
and $(s_{11},s_{12})$ to $(a,\rho)$. It is given explicitly by the formulas (\ref{Formel4}) and (\ref{Formel5}), from which we can compute its derivative. However, due to the complex structure of $H$, it is a cumbersome task to compute its derivative. It is much easier to compute the derivative of its inverse and apply the inverse function theorem. With $\{ (x,y) \,|\, 0 < x , |y| < x \}$ and $\{ (x,y) \,|\, 0 < x < 1, |y| < x \}$ being its domain and image, respectively, the function $H$ is invertible and continuously differentiable. Let $J$ denote its inverse. 
The function $J$ maps $(a,\rho)$ to $(s_{11},s_{12})$ and 
is described in Proposition \ref{prop:S}. 
In the following, we will compute its derivate, for which we require an explicit form of $J$.  
The eigenvalue decomposition of $V_0$ is given by
\begin{align*}
		\lambda_{1/2}= (2a)^{-1} \left( a^2 + 1 \pm \sqrt{q}  \right)
\end{align*} 
and 
\begin{align*}
	U = 
	\begin{pmatrix}
		{{2\,a\,\left| \rho\right| } \over{\left\{ \left(\sqrt{q}-a^2+1\right)^2+\,4 a^2\rho^2\right\}^{1/2}}} &
		{{2\,a\,\left| \rho\right| } \over{\left\{ \left(\sqrt{q}+a^2-1\right)^2 + \,4a^2\rho^2\right\}^{1/2}}} \\[3ex]
	 {{{\it \bs}\left(\rho\right)\,\left(\sqrt{q}-a^2+1\right)} \over{ \left\{ \left(\sqrt{q}-a^2+1\right)^2 + \,4 a^2\rho^2\right\}^{1/2}}} &
	 -{{{\it \bs}\left(\rho\right)\,\left(\sqrt{q}+a^2-1\right)} \over{ \left\{ \left(\sqrt{q}+a^2-1\right)^2 + \,4a^2\rho^2\right\}^{1/2}}}  
  \end{pmatrix},
\end{align*}
where $q=4a^2\rho^2+(a^2-1)^2$. By Proposition \ref{prop:S} (\ref{S 1}) and (\ref{S 2}) we find 
{\small
\begin{align*}
s_{11}=\frac{\sqrt{k}\{ 4a^2\rho^2+\sqrt{q}(a^2-1)+(a^2-1)^2\} + 
						 \sqrt{m}\{ 4a^2\rho^2+\sqrt{q}(1-a^2)+(a^2-1)^2)\} }
				{2 q (\sqrt{m}+\sqrt{k})}
\end{align*}}

\vspace{-1ex}
\noindent and $s_{12}=(2 q)^{-1} a\rho(\sqrt{k}-\sqrt{m})^2$, where $k = a^2+1+\sqrt{q}$ and $m = a^2+1-\sqrt{q}$. 
%
The derivative of $J$ is
\begin{align*}\dsD J(a,\rho)=
\begin{pmatrix}
   \frac{2a(a^2+1)\rho^2\sqrt{1-\rho^2}+(a^2-1)(1-\rho^2)  	}{  	q((a^2+1)\sqrt{1-\rho^2}+2a(1-\rho^2))}  &
		-\frac{(a-1)a(a+1)\rho(2a\sqrt{1-\rho^2}-a^2-1)        	}{		q((a^2+1)\sqrt{1-\rho^2}+2a(1-\rho^2))}  \\
		-\frac{(a-1)(a+1)\rho((a^2+1)\sqrt{1-\rho^2}-2a(1-\rho^2))}{	q((a^2+1)\sqrt{1-\rho^2}+2a(1-\rho^2))}	 &
		\frac{a((a-1)^2\sqrt{1-\rho^2}+2a(a^2+1)\rho^2)					}{		q((a^2+1)\sqrt{1-\rho^2}+2a(1-\rho^2))}
\end{pmatrix}.
\end{align*}
The determinant of this matrix is 
\[
	\det\dsD J(a,\rho) = a\sqrt{1-\rho^2} \left\{\left(a^2+1\right)\,\sqrt{1-\rho^2 }+2\,a\,\left(1-\rho^2\right)\right\}^{-2}.
\]
By virtue of the inverse function theorem, we have $\dsD H(s_{11},s_{12}) = (\dsD J (a,\rho))^{-1}$. Hence we obtain $\dsD H(s_{11},s_{12})$ by inverting the $2\times 2$ matrix $\dsD J(a,\rho)$. It can be seen to be (except for the zero columns) the matrix $G$ in Proposition \ref{prop:V_0}.
The proof is complete.
\end{proof}

\begin{proof}[Proof of Proposition \ref{prop:Skor}.]
Proposition \ref{prop:Skor} is an immediate corollary of Proposition \ref{prop:V_0}, noting that $\hat\rho_n = \hat{v}_{0,12}$. The asymptotic variance of $\hat\rho_n$ is the lower diagonal element of $W_{V_0}$ given in Proposition \ref{prop:V_0}. 
\end{proof}

\bibliographystyle{abbrvnat}
{\small

}

\end{document}